\newtheorem{theorem}{Theorem}[section]
\newtheorem{definition}[theorem]{Definition}
\newtheorem{lemma}[theorem]{Lemma}
\newtheorem{claim}[theorem]{Claim}
\newtheorem{fact}[theorem]{Fact}
\newcommand\E{\mathop{\mathbf{E}}}
\newcommand{\eps}{\varepsilon}
\newcommand{\etal}{{\em et al.\ }}
\newcommand{\val}{\mathbf{val}}
\newcommand{\rpred}{random-predicate\ }
\newcommand{\x}{\mathbf{x}}
\newcommand{\y}{\mathbf{y}}
\newcommand{\z}{\mathbf{z}}
\renewcommand{\v}{\mathbf{v}}
\title{Parallel Repetition of $k$-Player Projection Games}
\author{}
\author{Amey Bhangale\thanks{Department of Computer Science and Engineering, University of California, Riverside. Supported by the Hellman Fellowship award.}
	\and
    Mark Braverman\thanks{Department of Computer Science at Princeton University. Supported by the National Science Foundation under the Alan T. Waterman Award, Grant No. 1933331}
    \and
	Subhash Khot\thanks{Department of Computer Science, Courant Institute of Mathematical Sciences, New York University. Supported by the NSF Award CCF-2130816 and the Simons Investigator Award.}
	\and
    Yang P. Liu\thanks{School of Mathematics, Institute for Advanced Study, Princeton, NJ. Partially supported by NSF DMS-1926686.}
    \and 
	Dor Minzer\thanks{Department of Mathematics, Massachusetts Institute of Technology. Supported by NSF CCF award 2227876 and NSF CAREER award 2239160.}}
\date{}
\begin{document}
\maketitle

\begin{abstract}
We study parallel repetition of $k$-player games where the constraints satisfy the {\em projection} property. We prove exponential decay in the value of a parallel repetition of projection games with value less than $1$.
\end{abstract}

\section{Introduction}

We study $k$-player one-round games and the effect on the value of the game when we repeat the game in parallel. 

In a $k$-player game $G$, a verifier chooses $k$ questions $(x^1, x^2, \ldots, x^k)$ from a distribution $\mu$ on the set of questions  $\mathcal{X}_1\times \mathcal{X}_2\times \ldots \times \mathcal{X}_k$ and sends $x^i$ to player $i$. Player $i$ responds to the verifier's question by sending an answer $a^i\in \mathcal{A}_i$ without communicating with the other players. The verifier accepts the answers based on a fixed predicate $V((x^1, x^2,\ldots, x^k), (a^1, a^2,\ldots, a^k))$. The value of the game, denoted by $\val(G)$, is the maximum, over the players' strategies, accepting  probability of the verifier.

The $n$-fold parallel repetition of $G$, denoted by $G^{\otimes n}$, is defined as follows. The verifier sends questions $\vec{x}^i = (x^i_1, x^i_2, \ldots, x^i_n)$ to the $k$ players where for each $j\in [n]$, $(x^1_j, x^2_j,\ldots, x^k_j)$ is sampled from the original distribution $\mu$ independently. The $i^{th}$ player responds with answers $\vec{a}^i\in\mathcal{A}_i^n$. The verifier accepts the answers iff $V((x^1_j, x^2_j,\ldots,  x^k_j), (a^1_j, a^2_j,\ldots, a^k_j)) = 1$ for each $j\in [n]$. 

If $\val(G) = 1$, then it is easy to observe that $\val(G^{\otimes n})$ is also $1$. Also, $\val(G^{\otimes n})\geqslant \val(G)^n$ as the players can achieve value $\val(G)^n$ in the game $G^{\otimes n}$ by simply repeating an optimal strategy for the
game $G$ independently in all the $n$ coordinates. The question of interest is how does the quantity $\val(G^{\otimes n})$ decay with $n$ if the value of the game $G$ is less than $1$?


Verbitsky~\cite{Ver96} showed that for any $k$-player game $G$, if $\val(G)<1$, then $\val(G^{\otimes n}) \leqslant \frac{1}{\alpha(n)}$ where $\alpha(n)$ is an inverse Ackermann function. This result uses the Density Hales-Jewett Theorem~\cite{FurstenbergK89, Polymath12} as a black box.  For $2$-player games, Raz~\cite{Raz98} showed that if $\val(G)<1$, then $\val(G^{\otimes n}) \leqslant 2^{-\Omega_G(n)}$, where we use $\Omega_G(\cdot)$ to clarify that the constant depends on the game $G$. There have been many improvements
that improve the constants in the bounds, and even get better bounds based on the value $\val(G)$
of the initial game~\cite{Hol09, Rao11,DS14, BG14}. These
results on parallel repetition of $2$-player games have found many applications in probabilistically checkable proofs and hardness of approximation~\cite{BellareGS98, Feige98, Hastad01}.

Mittal and Raz~\cite{MittalR21} showed that a strong parallel repetition theorem (i.e., the value of $G^{\otimes n}$ decays exponentially in $n$ in a certain strong sense) for a particular class of
more than $2$-player games implies super-linear lower bounds for Turing machines in the non-uniform model. For any $k\geqslant 2$, Dinur, Harsha, Venkat, and Yuen~\cite{DinurHVY17} showed that for a large class of $k$-player games, called the {\em connected games}, the exponential decay indeed holds. The class of connected games is defined as follows: define the graph $H_G$, whose vertices are the ordered $k$-tuples of questions to the $k$-players, and there is an edge between questions $q$ and $q'$ if they differ in the question to exactly one of the $k$ players, and are the same for the remaining $k-1$ players. The game is said to be connected if the graph $H_G$ is connected. 

A special $3$-player (non-connected) game, called the GHZ Game~\cite{DinurHVY17}, has received much attention. The GHZ game, first introduced by Greenberger, Horne, and Zeilinger~\cite{GreenbergerHZ89}, is a central game in the study of quantum entanglement. Holmgren and Raz~\cite{HolmgrenR20} gave the first polynomial decay in the parallel repetition of the GHZ game. Girish, Holmgren, Mittal, Raz, and Zhan~\cite{GirishHMRZ21_GHZ} later gave a simpler proof of the polynomial decay. Very recently, Braverman, Khot, and Minzer~\cite{BravermanKM22}, using a much simpler proof, improved these previous results and showed an exponential decay in the GHZ game.

Girish, Holmgren, Mittal, Raz, and Zhan~\cite{GirishMRZ22_1} considered the problem of parallel repetition for $3$-player games with binary questions and answers and showed polynomial decay for these games. This was later improved by a subset of the authors \cite{GirishMRZ22_2} to all $3$-player games over binary questions and {\em arbitrary} answer lengths. They also study~\cite{GirishMRZ22_1} {\em player-wise connected} games $G$ that are defined as follows. For each player $i$, define the graph $H_i(G)$, whose vertices are the possible questions for player $i$, and two questions $x$ and $x'$ are connected by an edge if there exists a vector $y$ of questions for all other players, such that both $(x, y)$ and $(x', y)$ are asked by the verifier with non-zero probability. The game $G$ is player-wise connected if, for every $i$, the graph $H_i(G)$ is connected. Girish \etal~\cite{GirishMRZ22_1} showed polynomial decay in the value of $n$-fold parallel repetition of all player-wise connected games. Observe that the notion of player-wise connectedness is more general than the notion of connected games defined above.

In this paper, we will study a special type of $k$-player games, that we refer to as projection games. The formal definition is as follows.

\begin{definition}
    For any $k\geqslant 2$, a $k$-player game $G$ is called a projection game if for every $k$-tuple of question $q = (x^1, x^2,\ldots, x^k)$, there is $D_q\geqslant 1$ and projections $\sigma^i_q: \mathcal{A}_i \rightarrow [D_q]$ for $i\in [k]$, such that $V((x^1, x^2,\ldots, x^k), (a^1, a^2,\ldots, a^k))$ is true iff $\sigma^i_q(a^i) = \sigma^{i'}_q(a^{i'})$ for any $i\neq i'$.
\end{definition}

For every question $q = (x^1, x^2,\ldots, x^k)$, consider a $k$-partite hypergraph $\mathcal{H}_q$ on the vertex set $(\mathcal{A}_1, \mathcal{A}_2, \ldots, \mathcal{A}_k)$ where $(a^1, a^2,\ldots, a^k)$ is an hyperedge if and only if $V((x^1, x^2,\ldots, x^k), (a^1, a^2,\ldots, a^k))$ is true. Then, the projection property means that for every $k$-tuple of questions $q$ in the support of $\mu$, each connected component in $\mathcal{H}_q$ is a complete $k$-partite hypergraph. Note that this definition of projection games is slightly more general than the usual notion of projection $2$-player games~\cite{Raz98, Rao11} where one of the maps $\sigma^i_q$ (either $\sigma^1_q$ or $\sigma^2_q$) is an injective map.

Our main theorem shows that if the value of a projection game $G$ is less than $1$, then the value of $n$-fold parallel repetition of $G$ decays exponentially in $n$.

\begin{theorem}
\label{thm:projection_pr}
    For any $k\geqslant 2$, a projection $k$-player game $G$ and $\eps>0$, if $\val(G) = 1-\eps$, then $\val(G^{\otimes n}) \leqslant \exp({-\Omega_{\eps,G}(n)})$.
\end{theorem}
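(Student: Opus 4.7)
The plan is to adapt the information-theoretic parallel-repetition machinery developed for $2$-player games (Raz, Holenstein, Rao) to the $k$-player projection setting, exploiting the structure of the projection predicate to control the correlations introduced by conditioning on winning. I would fix any strategy $(f_1, \ldots, f_k)$ for $G^{\otimes n}$, let $W$ denote the event that all $n$ coordinates are won, and aim to show $\Pr[W] \leq \exp(-\Omega_{\eps, G}(n))$.

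First I would follow the standard splitting framework: pick a random subset $V \subseteq [n]$ of moderate size and condition on $W_V := \bigcap_{j \in V} W_j$. The chain rule for probabilities yields, for a random coordinate $j_0 \notin V$, the lower bound $\Pr[W_{j_0} \mid W_V] \geq \Pr[W]^{1/n}$ on average. The next step would be to exhibit a single-shot strategy for $G$ whose acceptance probability approximately matches $\Pr[W_{j_0} \mid W_V]$; since $\val(G) = 1 - \eps$, this would force $\Pr[W]^{1/n} \leq 1 - \Omega_{\eps,G}(1)$, giving the exponential decay.

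The crux is a correlated-sampling lemma: on average over $j_0 \notin V$, each player $i$ in the single-shot game should use $x^i_{j_0}$ and shared randomness to sample its own question/answer pair in $G^{\otimes n}$, together with approximate marginals of the posterior $(X_V, A_V) \mid W_V$, so that feeding these into $(f_1, \ldots, f_k)$ produces an answer for coordinate $j_0$ that wins with probability close to $\Pr[W_{j_0} \mid W_V]$. For $k=2$ this is Raz's argument, relying on two players being able to consistently sample a shared distribution from approximately-shared marginals. For $k \geq 3$ I would exploit the projection structure through the label $L_j := \sigma^i_{q_j}(a^i_j)$, which is well-defined on $W_j$ and, crucially, is a scalar certificate of winning: conditional on $(X_V, L_V)$, the $k$ players' posteriors should approximately factorize, since the joint winning event $W_V$ is captured entirely by the agreement of the per-coordinate labels. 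This would reduce the $k$-way correlated sampling to a sequence of pairwise samplings (each player against the label), keeping the statistical error polynomial in the relevant parameters.

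The main obstacle will be quantifying the decoupling afforded by $L_V$. Since $\sigma^i_q$ depends on the full tuple $q$, the label is not directly observable to any single player and must be introduced as external shared randomness; its cost must then be paid in information-theoretic terms, via a chain-rule bound on the KL divergence between the true posterior and the simulated product distribution. The technical heart of the argument is showing that this cost is only a constant fraction of the information budget $\log(1/\Pr[W])$, with a constant depending on $\eps$ and $G$ but not on $n$. Once this is in place, iterating the splitting argument across $\Theta(n/|V|)$ batches of coordinates converts the per-batch gain of $(1 - \Omega_{\eps,G}(1))$ into the global bound $\Pr[W] \leq \exp(-\Omega_{\eps,G}(n))$.
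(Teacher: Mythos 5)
Your plan is genuinely different from the paper's, and it has a real gap that the paper's route deliberately avoids. You are proposing to run the Raz/Holenstein/Rao information-theoretic machinery directly in the $k$-player setting, with the projection label $L_V$ serving as shared side information that ``decouples'' the players. The decoupling claim is where this breaks. What the label buys you is that, once $(X_V, L_V)$ is fixed, the event $W_V$ \emph{does} become a product event across players (each player $i$'s part of $W_V$ depends only on $X^i$). But the posterior on the remaining questions $X_{\bar V}$ is $\mu^{\otimes \bar V}$ \emph{tilted} by this product; since $\mu^{\otimes\bar V}$ itself is a correlated $k$-party distribution, the tilted posterior is nowhere near a product of single-player marginals, and each player only sees its own marginal. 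To use your plan you would therefore still need $k$ players, each holding its own approximately-correct marginal, to jointly produce a single consistent sample of the full correlated tuple. For $k=2$ this is Holenstein's correlated-sampling lemma, which is exactly what Raz/Rao use. For $k\geqslant 3$ there is no analogue: multi-party correlated sampling from approximately-consistent marginals is known to fail with constant error even when pairwise statistical distances are tiny, and this is precisely the obstruction that has kept general $k$-player parallel repetition open. ``Reducing to a sequence of pairwise samplings against the label'' does not fix this, because the label certifies agreement of answers, not agreement of the sampled question tuples; the players must still all land on the same realization of the correlated $X_{\bar V}$.

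The paper sidesteps the information-theoretic machinery entirely. It defines combinatorial transformations $\mathcal{T}^i_p$ that take a game $G$ to a richer game over the same question alphabets by a ``path trick'': sample two question tuples $q,q'$ sharing player $i$'s coordinate, splice them into a new tuple $\Pi^p((q,q'))$, and accept by default when $q\neq q'$. Two short arguments then do all the work. First, a single Cauchy--Schwarz (Claim~\ref{claim:path}/Claim~\ref{claim:gameGH_lb}) plus the projection property gives $\val(\mathcal{T}^i_p(G)^{\otimes n})\geqslant \val(G^{\otimes n})^2$, preserving the relationship between the repeated values at a quadratic cost. Second, a finite number of such transformations yields a connected game (Claim~\ref{claim:larger_conn}), after which one invokes the Dinur--Harsha--Venkat--Yuen theorem for connected games as a black box. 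In effect, the paper pays for the lack of a $k$-party correlated-sampling lemma by strictly enlarging the game's question support until it becomes connected, and then lets the known connected-game result handle the exponential decay. Your label-based intuition is a good way to see \emph{why} the projection property should help, but it does not by itself overcome the multi-party sampling barrier that the paper's reduction is designed to avoid.
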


Projection games are a natural subclass of general games. They played a key role in the development~\cite{AroraLMSS1998, AroraS1998, FGLSS96} of Probabilistically Checkable Proofs (PCPs). In fact, parallel repetition from $2$-player projection games had been useful in proving many~\cite{GHS02,Khot02, Khot02_33, DGKR05, DRS02} tight hardness of approximation results, starting with the work of  Arora, Babai, Stern, and Sweedyk~\cite{AroraBSS97}, Bellare, Goldreich, and Sudan~\cite{BellareGS98}, and H{\aa}stad~\cite{Hastad01}. 

Feige~\cite{Feige98} used a $k$-player projection game, and parallel repetition of the game, to show almost tight hardness of approximating the Set-Cover problem. The decay in the value of a parallel-repeated game, in that case, follows easily from the parallel-repetition theorem for the $2$-player game, as the subgame restricted to any two players has a value less than $1$.

There are $k$-player projection games where the decay in the value of a parallel-repeated game does not trivially follow from the parallel-repetition theorem for the $2$-player game. To give a concrete example, consider a simultaneous Max-3-SAT instance problem defined in~\cite{BhangaleKS15}: the instance consists of $n$ variables $X=\{x_1, x_2, \ldots, x_n\}$ and $k$ instances, $\phi_1, \phi_2, \ldots, \phi_k$, of Max-3-SAT defined over the same set of variables $X$. The verifier chooses a variable $x\in X$ at random and selects clauses $C_i\in \phi_i$ independently such that $x\in C_i$ for all $i\in [k]$. The verifier sends clause $C_i$ to player $i$ and expects a satisfying assignment from $\{0,1\}^3$ to $C_i$ from player $i$. The verifier checks if the assignments returned by the players agree on $x$. Consider the scenario when it is possible to satisfy any $(k-1)$ out of $k$ instances of Max-3-SAT simultaneously, but there is no assignment to $X$ that will satisfy all the $k$ instances. In this case, the value of the game is less than $1$. For any $k'$-player subgame, where $k'<k$, the value of the subgame is $1$. Therefore, we cannot use the parallel repetition of $2$-player games to conclude that the value of $n$-fold parallel repetition of projection games decays with $n$.

\subsection{Proof outline}
As mentioned in the introduction, Dinur, Harsha, Venkat, and Yuen~\cite{DinurHVY17} showed that for any connected $k$-player games $H$ with $\val(H)<1$, the exponential decay holds for the value of $H^{\otimes n}$. 
We start with a $k$-player game $G$ which is not connected to begin with. At a high level, we transform the game $G$ to another game $H$ where $H$ is connected. While doing such a transformation, we want to make sure we have the following two properties.
\begin{enumerate}
    \item If $\val(G)<1$, then $\val(H)<1$.
    \item There is a way to relate $\val(G^{\otimes n})$ with $\val(H^{\otimes n})$, possibly with a small loss in the constants in the exponent.
\end{enumerate}
As $H$ is connected, we have $\val(H^{\otimes n}) = \exp(-\Omega_H(n))$ and this will complete the proof.

There is a trivial transformation that makes any game connected -- add all possible $k$-tuple of questions, play the game $G$ on the original questions, and accept all the newly added questions by default. It is easy to see that if $\val(G)<1$, then the value of the transformed game is less than $1$. However, in this case, there does not seem to be an easy way to relate $\val(G^{\otimes n})$ to the value of $n$-fold parallel repetition of the transformed game.

In order to overcome the issue, we make the game $G$ connected gradually.  More concretely, we start with a game $G_0 = G$ and iteratively, we convert the game $G_\ell$ to $G_{\ell+1}$ for $\ell = 0, 1,\ldots$ with the following three properties.
 \begin{enumerate}
    \item For every $\ell\geqslant 0$, the game $G_\ell$ is a $k$-player game with the questions from the set $\mathcal{X}_1\times \mathcal{X}_2\times \ldots \times \mathcal{X}_k$.
     \item  The game $G_{\ell+1}$ is {\em richer} than the game $G_\ell$. In our case, we would be interested in increasing the support of distribution on questions, i.e., $\mathsf{supp}(\mu(G_{\ell+1}))\supsetneq \mathsf{supp}(\mu(G_\ell))$ (unless, of course, $\mathsf{supp}(\mu(G_\ell)$ is full).
     \item We can relate the value of the game $G_\ell^{\otimes n}$ to the value of the game  $G^{\otimes n}_{\ell+1}$ up to a fixed polynomial factor. Furthermore, $\val(G_{\ell+1})<1$ if $\val(G_{\ell})<1$.
 \end{enumerate}
 Let us see that this is enough to prove our main theorem. Using properties $1$ and $2$, for some $t\geqslant 1$, which only depends on the size of the game $G$, we can conclude that the game $G_t$ has full support and hence is {\em connected}. Using property $3$, we have $\val(G^{\otimes n}) \approx \val(G_t^{\otimes n})^{C_t}$, where $C_t>0$ is a constant that only depends on $t$, and furthermore $\val(G_t)<1$ if $\val(G)<1$ to begin with. Finally, using the result by Dinur, Harsha, Venkat, and Yuen~\cite{DinurHVY17} on connected games, we have, $\val(G_t^{\otimes n}) = \exp(-\Omega_{\eps,G_t}(n))$, and hence $\val(G^{\otimes n}) = \exp(-\Omega_{\eps,t,G}(n))$ if $\val(G)<1$.

\paragraph{The transformation $G_\ell$ to $G_{\ell+1}$.} We illustrate the idea of such a transformation in a $3$-player game $G_\ell$. We start with the game $G_\ell$ and let $\mu(G_\ell)$ be the distributions on the questions in $G_\ell$. For every pair of question-triples $q=(x, y, z)$ and $q'=(x', y', z')$ from $\mathsf{supp}(\mu(G_\ell))$ such that $x=x'$, we add a question triple $\Pi^{3}((q,q')):=(x, y, z')$ to the game $G_{\ell+1}$. Note that in this case, we took two question-triples $(q,q')$ that share player 1's question and generate a question-triple in the new game with the first two players' questions from $q$ and player $3$'s question from $q'$. We now state the set of accepting assignments for $\Pi^{3}((q,q'))$ as follows. If $q\neq q'$, then accept the question $\Pi^{3}((q,q'))$ by default, otherwise accept $\Pi^{3}((q,q'))$ according to the verifier from the original game $G_\ell$ on the question $q(=q')$.\footnote{Note that the way the game $G_{\ell+1}$ is defined, the set of accepting answers for the same question-triple changes based on the underlying pair of questions $(q,q')$, but we ignore this issue in this proof outline.}  We call such a transformation $\mathcal{T}^1_3$ -- the superscript stands for the common player's question from $(q,q')$ and the subscript $3$ stands for taking player 3's question from $q'$ and rest of the questions from $q$ in generating the question-triple in the new game. Succinctly, we write $G_{\ell+1}$ as the game $\mathcal{T}^1_3(G_\ell)$. Likewise, we can define transformations $\mathcal{T}^i_p$ for any $1\leqslant i,p\leqslant 3$.

We show the following key properties of these transformations.
\begin{enumerate}
    \item If $\val(G_\ell)<1$, then $\val(\mathcal{T}^i_p(G_\ell))<1$. Furthermore, $\mathcal{T}^i_p(G_\ell)$ remains a projection game if $G_\ell$ is a projection game.
    \item For every $n\geqslant 1$, $\val(G_\ell^{\otimes n}) \leqslant \val(\mathcal{T}^{i}_{p}(G_\ell)^{\otimes n})^{1/2}$, if $G_\ell$ is a projection game.
\end{enumerate}

The first property is trivial -- in the game $\mathcal{T}^i_p(G_\ell)$, we are still playing the game $G_\ell$ as a subgame and hence its value is less than $1$ if $\val(G_\ell)<1$. For the furthermore part, we are either accepting everything by default or using the same predicate as in the original game and hence this transformation maintains the projection property of the game. 

For the second property, we crucially use the {\bf projection property} of the game $G_\ell$. We show that for any strategy $(\alpha^1, \alpha^2, \alpha^3)$, where $\alpha^i: \mathcal{X}_i^n \rightarrow \mathcal{A}_i^n$, for the game $G_\ell^{\otimes n}$ with value $\eps$, we show that the {\em same} strategy gives value at least $\eps^2$ to the game $\mathcal{T}^i_p(G_\ell)$. We illustrate this for the game $\mathcal{T}^1_3(G_\ell)$. Let $\mu$ be the distribution of questions from $G_\ell$ and $\mu|_{1}$ be the marginal distribution on player 1's questions, we have

\begin{align*}
    \eps^2 &\leqslant \E_{(\x , \y, \z)\sim \mu^{\otimes n}}[V((\x, \y, \z), (\alpha^1(\x), \alpha^2(\y), \alpha^3(\z)))   ] ^2\\
    & = \left(\E_{\v\in \mu^{\otimes n}|_{1}} \left[\E_{\substack{(\x, \y, \z)\sim \mu^{\otimes n},\\ \x = \v}}[V((\x, \y, \z), (\alpha^1(\x), \alpha^2(\y), \alpha^3(\z))) \right]\right)^2\\
    &\leqslant  \E_{\v\in \mu^{\otimes n}|_{1}} \left[\E_{\substack{(\x, \y, \z)\sim \mu^{\otimes n},\\ \x = \v}}[V((\x, \y, \z), (\alpha^1(\x), \alpha^2(\y), \alpha^3(\z))) \right]^2\quad\quad\quad\tag*{(Cauchy-Schwarz)}\\
     &=  \E_{\v\in \mu^{\otimes n}|_{1}} \left[\E_{\substack{(\x, \y, \z)\sim \mu^{\otimes n},\\ (\x', \y', \z')\sim \mu^{\otimes n},\\ \x = \x' = \v}}[V((\x, \y, \z), (\alpha^1(\x), \alpha^2(\y), \alpha^3(\z)))\cdot V((\x', \y', \z'), (\alpha^1(\x'), \alpha^2(\y'), \alpha^3(\z'))) \right]
\end{align*}
Now, if we look at the triple $(\x, \y, \z')$ sampled according to the above distribution, then for each $j\in [n]$, we have that the triple $(x_j, y_j, z'_j)$ is distributed according to the game $\mathcal{T}^1_3(G_\ell)$ independently. In the game, $\mathcal{T}^1_3(G_\ell)$, for any $j\in [n]$ such that $(x_j, y_j, z_j)\neq (x'_j, y'_j, z'_j)$ the new verfier is accepting by default. As for $j\in [n]$ such that $(x_j, y_j, z_j) = (x'_j, y'_j, z'_j)$, the new verifier is accepting according to the original verifier on the question $(x_j, y_j, z_j)$. In this case, suppose $(\alpha^1(\x)_j, \alpha^2(\y)_j, \alpha^3(\z)_j)$ and $(\alpha^1(\x')_j, \alpha^2(\y')_j, \alpha^3(\z')_j)$ are two satisfying assignments to the same question $(x_j, y_j, z_j)$ according to the original game $G_\ell$ with $\alpha^1(\x)_j = \alpha^1(\x')_j$ (as $\x = \x'$), then because of the projection property of the game, we have that $(\alpha^1(\x)_j, \alpha^2(\y)_j, \alpha^3(\z')_j))$ must be a satisfying assignment for $(x_j, y_j, z_j)$. As in the game $\mathcal{T}^1_3(G_\ell)^{\otimes n}$, we are precisely checking this for all such $j\in [n]$, we get that the same strategy $(\alpha^1, \alpha^2, \alpha^3)$ gives
$$\val(\mathcal{T}^1_3(G_\ell)^{\otimes n})\geqslant \eps^2.$$
\paragraph{Putting everything together.} Using the above two properties of the transformations $\mathcal{T}^i_p$, we conclude that if $\val(G)<1$, then for any $t\geq 1$ and vectors $\vec{i}, \vec{p}\in [3]^t$,
$$ \val(G^{\otimes n}) \leqslant \val(\mathcal{T}^{i_{t}}_{p_{t}}(\ldots(\mathcal{T}^{i_2}_{p_2}(\mathcal{T}^{i_1}_{p_1}(G))))^{\otimes n})^{1/2^t}, \mbox{ and } \mathcal{T}^{i_{t}}_{p_{t}}(\ldots(\mathcal{T}^{i_2}_{p_2}(\mathcal{T}^{i_1}_{p_1}(G)))<1.$$
Finally, we show that there exist $t\geq 1$ and vectors $\vec{i}, \vec{p}\in [3]^t$, where $t$ depends on the size of the game $G$, such that the game $\mathcal{T}^{i_{t}}_{p_{t}}(\ldots(\mathcal{T}^{i_2}_{p_2}(\mathcal{T}^{i_1}_{p_1}(G)))$ is connected (in fact, has full support). This implies that 
$$\val(G^{\otimes n}) \leqslant \val(\mathcal{T}^{i_{t}}_{p_{t}}(\ldots(\mathcal{T}^{i_2}_{p_2}(\mathcal{T}^{i_1}_{p_1}(G))))^{\otimes n})^{1/2^t}\leqslant \exp(-\Omega_{t,G}(n)),$$
where the last inequality follows from the result of a parallel repetition theorem~\cite{DinurHVY17} on connected games.

\section{Preliminaries}

We start with a few notations. We use $\mu(G)$ to denote the distribution on the questions in the game $G$. For $i\in [k]$, let $\mu|_{i}$ be the marginal distribution on the questions to player $i$. For a $k$-tuple of questions  $q = (x^1, x^2, \ldots, x^k)$, we denote the question to player $i$ by $q|_{i}$, i.e., $q|_{1} = x^1$, $q|_{2} = x^2$, and so on. For an assignment  $\alpha:=(\alpha^1, \alpha^2, \ldots, \alpha^k)$ to the game $G$, where $\alpha^i: \mathcal{X}_i \rightarrow\mathcal{A}_i$, and any question $q = (x^1, x^2, \ldots, x^k)$, we use the notation $\alpha|_{q}$ to denote the assignment-tuple $(\alpha^1(x^1), \alpha^2(x^2), \ldots, \alpha^k(x^k))$. The size of the game $G$ is referred to as the quantity $k\cdot M\cdot \prod_{i=1}^k |\mathcal{X}_i||\mathcal{A}_i|$. Here, the probability of every atom in $\mathsf{supp}(\mu(G))$ is a multiple of $1/M$, where $M$ is a finite integer (this we assume for simplicity).

\subsection{Parallel repetition of connected games}

As mentioned earlier, Dinur, Harsha, Venkat, and Yuen~\cite{DinurHVY17} showed that for a large class of $k$-player games, called {\em connected games}, the exponential decay indeed holds. Here, we define the notion of connected games for $k$-player games formally.

\begin{definition}[Connected game]
    A game $G$ is called connected if for every two question pairs $(x^1, x^2,\ldots, x^k)$ and $(x'^1, x'^2,\ldots, x'^k)$ from $\mathsf{supp}(\mu(G))$, there is an ordered list of questions  from $\mathsf{supp}(\mu(G))$, $((x^1_\ell, x^2_\ell,\ldots, x^k_\ell))_{\ell=1}^t$ for some $t\geqslant 1$, such that the pairs $((x^1, x^2,\ldots, x^k), (x^1_1, x^2_1,\ldots, x^k_1))$, $((x^1_t, x^2_t,\ldots, x^k_t),(x'^1, x'^2,\ldots, x'^k))$, and $((x^1_\ell, x^2_\ell,\ldots, x^k_\ell), (x^1_{\ell+1}, x^2_{\ell+1},\ldots, x^k_{\ell+1}))$ for all $1\leqslant \ell\leqslant t-1$ differ in only one out of the $k$ questions.
\end{definition}

We will relate the value of $G^{\otimes n}$, where $G$ is a projection game, with a value of $n$-fold parallel repetition of another game $H$ that is connected. The following theorem shows that for connected games with a value less than $1$, the value of repeated games goes down exponentially in $n$.

\begin{theorem}[\cite{DinurHVY17}]
\label{thm:connected_pr}
     For any $k\geqslant 2$ and $\eps>0$, if $H$ is a connected $k$-player game with $\val(H) = 1-\eps$, then $\val(H^{\otimes n}) \leqslant \exp({-\Omega_{\eps,H}(n)})$.
\end{theorem}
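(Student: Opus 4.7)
My plan is to reduce Theorem~\ref{thm:connected_pr} to Raz's classical 2-player parallel repetition theorem, with connectedness of $H_G$ entering precisely at the reduction step. Given a connected $k$-player game $H$ with $\val(H) = 1 - \eps$, I would construct a 2-player game $H'$ in which virtual player $A$ plays the role of a single original player (say player $k$) and virtual player $B$ plays the role of the coalition of players $\{1, \ldots, k-1\}$. The verifier samples $q \sim \mu(H)$, sends $x^k$ to $A$ and $(x^1, \ldots, x^{k-1})$ to $B$, collects one answer from $A$ and a $(k-1)$-tuple from $B$, and applies the original predicate of $H$. Any $H^{\otimes n}$ strategy lifts coordinate-wise to an $H'^{\otimes n}$ strategy of the same value (since $B$ has strictly more information than the original players it simulates), so $\val(H^{\otimes n}) \leq \val(H'^{\otimes n})$.

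The key step is to show $\val(H') \leq 1 - \eps'$ for some $\eps' = \Omega_{\eps, H}(1)$. The qualitative version $\val(H') < 1$ should follow from a patching argument using connectedness: a winning strategy $(\alpha, \beta)$ for $H'$ assigns coalition answers $\beta(x^1, \ldots, x^{k-1})$ to every tuple appearing as the first $k-1$ coordinates of some $q \in \mathsf{supp}(\mu(H))$. Connectedness of $H_G$ lets us walk from one such question tuple to any other by single-player edge moves, which, combined with consistency of $\beta$ along each walk, yields deterministic per-player strategies $\gamma^j$ for each $j < k$. Together with $\alpha$, these form a winning strategy for $H$, contradicting $\val(H) < 1$. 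Making this quantitative---turning ``close-to-$1$'' $H'$-values into ``close-to-$1$'' single-player strategies while losing only a factor that depends on the diameter of $H_G$---is the main technical obstacle.

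Once $\val(H') \leq 1 - \eps'$ is in hand, Raz's theorem yields $\val(H'^{\otimes n}) \leq \exp(-\Omega_{\eps', H'}(n))$, which together with the lift gives the claimed bound on $\val(H^{\otimes n})$. The hardest part is the quantitative patching; an alternative route, if that proves too lossy, is a direct information-theoretic argument in the spirit of Holenstein, where dependency-breaking variables are introduced as usual and connectedness is invoked to argue that any conditional subgame of $H$ compatible with the correct marginals still has value bounded away from $1$, since connectedness forces any such conditional distribution with full marginal support to inherit the gap of the original game via a walk on $H_G$.
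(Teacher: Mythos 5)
This theorem is quoted from \cite{DinurHVY17} and used as a black box in the paper, so there is no in-paper proof to compare against; I therefore assess your proposal on its own. The crux of your plan---that connectedness forces $\val(H') < 1$ for the two-player coalition game $H'$---is false, and the gap is exactly the step you call ``consistency of $\beta$ along each walk.'' Nothing in the verifier's predicate forces the coalition's answer on player $j$'s coordinate to depend only on $x^j$: the predicate $V(q,a)$ is evaluated one question tuple at a time, and a general $k$-player game carries no agreement or projection constraint binding the coalition's answers across neighboring tuples of $H_G$. Concretely, take the connected $3$-player game with $\mathcal{X}_1 = \mathcal{X}_2 = \{0,1\}$, $\mathcal{X}_3 = \{0\}$, uniform questions, answers $a^1, a^2 \in \{0,1\}$ with $a^3$ irrelevant, and predicate $a^1 \oplus a^2 = x^1 \wedge x^2$. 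This is the CHSH game with a dummy third player: it is connected and $\val(H) = 3/4 < 1$. Grouping $B = \{1,2\}$ against $A = \{3\}$ as you propose, the coalition $B$ sees both $x^1$ and $x^2$ and can simply output $(x^1 \wedge x^2, 0)$, so $\val(H') = 1$ and Raz's theorem gives you nothing.

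One might try to choose the coalition split more cleverly (in this example $B = \{2,3\}$ preserves the gap), but your proposal fixes player $k$ and in any case does not argue that some split always works---and the whole difficulty of $k$-player parallel repetition for $k \geqslant 3$ is precisely that grouping players into two coalitions can destroy the value gap. Connectedness of the question graph does not by itself remove this obstruction: it constrains which question tuples occur, not how a coalition may correlate its answer coordinates. Your second suggested route---a direct information-theoretic argument in which connectedness is invoked inside a Holenstein/Raz-style proof to control the conditioned subgames---is much closer in spirit to what \cite{DinurHVY17} actually does. But as written it is only a sketch, and the load-bearing step, namely why connectedness lets all $k$ players jointly embed a single coordinate of the repeated game into $H$ while sampling consistently from the conditioned distribution, is exactly the part you leave open.
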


\subsection{Variants of multiplayer games}
In this section, we simplify the class of games that we study. Towards this, we define the notion of {\em loosely-connected} games as follows.

\begin{definition}[Loosely-connected game]
    A game on the question set $\mathcal{X}_1\times \mathcal{X}_2\times \ldots \times \mathcal{X}_k$  is loosely-connected if it is not possible to partition $\mathcal{X}_i = \mathcal{X}'_i \cup \mathcal{X}''_i$ for all $i\in [k]$, so that all $k$-tuple of questions from the support of $\mu(G)$ are in $\mathcal{X}'_1\times \mathcal{X}'_2\times \ldots \times \mathcal{X}'_k$ or $\mathcal{X}''_1\times \mathcal{X}''_2\times \ldots \times \mathcal{X}''_k$.
\end{definition}

The following lemma states that we can assume without loss of generality that the game $G$ is {\em loosely-connected.}
\begin{lemma}
    \label{lemma:loosely_c}
    If the exponential decay in the $k$-player parallel repetition holds for all projection loosely-connected games, then it also holds for all projection games.
\end{lemma}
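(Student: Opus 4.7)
My approach is strong induction on the size of $G$: either $G$ is already loosely-connected and the hypothesis applies directly, or I split $G$ into two strictly smaller projection games whose parallel-repetition decay I already control inductively, and then combine the pieces through an independence-across-pieces argument and a Chernoff estimate. So suppose $G$ is not loosely-connected; by the definition of loosely-connected games there exist partitions $\mathcal{X}_i = \mathcal{X}'_i \sqcup \mathcal{X}''_i$ for every $i \in [k]$ such that $\mathsf{supp}(\mu(G))$ is contained in $(\mathcal{X}'_1 \times \cdots \times \mathcal{X}'_k) \cup (\mathcal{X}''_1 \times \cdots \times \mathcal{X}''_k)$. Let $p \in (0,1)$ be the total mass $\mu(G)$ places on the first side and let $G'$, $G''$ be the subgames obtained by conditioning $\mu(G)$ on each side (with the verifier's predicate inherited). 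Both are projection games, both are strictly smaller than $G$, and since strategies on $\mathcal{X}'_i$ are wholly independent of strategies on $\mathcal{X}''_i$ we have $\val(G) = p\val(G') + (1-p)\val(G'')$; thus $\val(G) < 1$ forces one of them, say $G'$, to satisfy $\val(G') < 1$. By induction, $\val(G'^{\otimes m}) \leqslant \exp(-c\cdot m)$ for some constant $c = c(G') > 0$.

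For the parallel-repetition bound on $G^{\otimes n}$, fix any strategy $\alpha = (\alpha^1, \ldots, \alpha^k)$. Each coordinate $j \in [n]$ independently lies on the $G'$-side with probability $p$, giving a random set $S \subseteq [n]$ of $G'$-side coordinates with $|S| \sim \mathrm{Bin}(n, p)$. Because $\mathcal{X}'_i \cap \mathcal{X}''_i = \emptyset$, each player can read off $S$ from their own question vector alone. Hence, conditioned on any fixing of the $[n]\setminus S$ questions, the maps $\alpha^i$ restrict to a legitimate strategy for $G'^{\otimes |S|}$ on the $S$-coordinates, so the conditional probability of winning every $S$-round is at most $\val(G'^{\otimes |S|})$. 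Since winning $G^{\otimes n}$ in particular requires winning every $S$-round, averaging gives
\begin{align*}
\val(G^{\otimes n}) \;\leqslant\; \E_{m \sim \mathrm{Bin}(n, p)}\bigl[\val(G'^{\otimes m})\bigr] \;\leqslant\; \E_{m \sim \mathrm{Bin}(n, p)}\bigl[e^{-c m}\bigr] \;=\; \bigl(1 - p(1 - e^{-c})\bigr)^n,
\end{align*}
which is $\exp(-\Omega_G(n))$ and closes the induction.

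The only delicate step is justifying that fixing the $G''$-side coordinate questions really does produce a strategy on $G'^{\otimes |S|}$; this is exactly where the loosely-connectedness partition is used, since disjointness of $\mathcal{X}'_i$ from $\mathcal{X}''_i$ lets each player locally identify their $G'$-side coordinates and never mix the two sub-strategies. After that the rest is bookkeeping and a Chernoff-style calculation, so I do not anticipate substantive obstacles beyond being careful with the conditioning.
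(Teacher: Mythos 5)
Your proof is correct and takes a mildly different route from the paper's, which is worth noting. The paper argues by contrapositive from a high-value strategy: if $\val(G^{\otimes n}) = \eta$, it uses a Chernoff bound to find a large set $T$ of $G'$-side coordinates and then an averaging (pigeonhole) step to fix the $\overline{T}$-questions, producing a strategy for $G'^{\otimes |T|}$ of value at least $\eta - 2^{-\Omega_\delta(n)}$, and concludes from the additive bound. You instead bound directly
\[
\val(G^{\otimes n}) \leqslant \E_{m\sim \mathrm{Bin}(n,p)}\bigl[\val(G'^{\otimes m})\bigr]
\]
and then evaluate the binomial moment generating function, which is cleaner: it sidesteps both the Chernoff concentration step and the additive error term $2^{-\Omega_\delta(n)}$, folding everything into one multiplicative estimate. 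You also make the recursion explicit by strong induction on game size, whereas the paper phrases the reduction to a loosely-connected subgame $G'$ as a ``without loss of generality''; your version is a bit more careful on that point. The one place to be mindful is the claim $\val(G'^{\otimes m}) \leqslant e^{-cm}$ for all $m\geqslant 0$: the inductive hypothesis nominally gives $\exp(-\Omega(m))$, which only directly yields $e^{-c_0 m}$ for $m\geqslant m_0$. But since $\val(G'^{\otimes m})$ is nonincreasing in $m$ and $\val(G')<1$, one can absorb the finitely many small $m$ into a uniform constant $c>0$, and $m=0$ is the trivial equality $1\leqslant 1$; it is worth stating this explicitly. Otherwise the argument is complete and the decomposition $\val(G)=p\,\val(G')+(1-p)\,\val(G'')$, the locality of the set $S$ from disjointness of $\mathcal{X}'_i$ and $\mathcal{X}''_i$, and the conditional reduction to $G'^{\otimes |S|}$ are all exactly right.
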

\begin{proof}
   We defer the proof of this lemma to the appendix.
\end{proof}

We also consider a slight variation in the definition of $k$-player games, that we call {\em \rpred}$k$-player games, where we allow a verifier to use a random predicate instead of a fixed predicate during verification. 

\begin{definition}[Random-predicate game]
    A \rpred game $G$ is defined as follows. There exists $R\geqslant 1$ such that the verifier chooses the $k$-tuple of questions $(x^1, x^2,\ldots, x^k)$ according to the distribution $\mu(G)$ on the set of questions and $r\in [R]$ uniformly at random, sends $x^i$ to player $i$. The player $i$ responds with the answer $a^i$. Finally, the verifier accepts the answers based on a fixed predicate $V_r((x^1, x^2,\ldots, x^k), (a^1, a^2,\ldots, a^k))$. We denote such games by $(G, \mu ,[R])$. 
\end{definition}

The following lemma states that for this variation of connected $k$-player games $G$ the exponential decay from~\cite{DinurHVY17} still holds.

\begin{lemma}
\label{lemma:rpred_pr}
     For any connected \rpred $k$-player game $H$ and $\eps>0$, if $\val(H) = 1-\eps$, then $\val(H^{\otimes n}) \leqslant \exp({-\Omega_{\eps,H}(n)})$.
\end{lemma}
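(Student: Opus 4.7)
The plan is to reduce the random-predicate game $(H, \mu, [R])$ to an ordinary (non-random-predicate) $(k+1)$-player game $\hat{H}$ and then invoke Theorem~\ref{thm:connected_pr}. The idea is to introduce a ``dummy'' $(k{+}1)$-th player whose sole role is to receive the random tag $r$ that the rpred verifier would otherwise sample privately; the predicate of $\hat{H}$ then becomes a deterministic function of the resulting $(k{+}1)$-tuple of questions, while the dummy player's answer has no effect on acceptance.

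Concretely, I would define $\hat{H}$ by the question set $\mathcal{X}_1 \times \cdots \times \mathcal{X}_k \times [R]$ with distribution $\hat{\mu} := \mu \times \mathrm{Unif}([R])$; for $i \in [k]$, player $i$ has the same answer set $\mathcal{A}_i$ and receives $x^i$, while player $k{+}1$ has singleton answer set $\{\ast\}$ and receives $r$. The fixed predicate is
\[
  \hat{V}\bigl((x^1,\ldots,x^k,r),\,(a^1,\ldots,a^k,\ast)\bigr) \;:=\; V_r\bigl((x^1,\ldots,x^k),\,(a^1,\ldots,a^k)\bigr).
\]
Since player $k{+}1$'s answer is forced, any strategy for $\hat{H}$ (respectively, for $\hat{H}^{\otimes n}$) is determined by the strategies of the first $k$ players, and a direct calculation of expected acceptance gives $\val(\hat{H}) = \val(H) = 1-\eps$ and $\val(\hat{H}^{\otimes n}) = \val(H^{\otimes n})$ for every $n \geq 1$, using in the latter identity that the $r_j$'s are sampled independently and that the marginals of $\hat{\mu}^{\otimes n}$ on the first $k$ coordinates match $\mu^{\otimes n}$.

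Next, I would verify that $\hat{H}$ is connected as a $(k{+}1)$-player game. Given $(x, r)$ and $(x', r')$ in $\mathsf{supp}(\hat{\mu}) = \mathsf{supp}(\mu) \times [R]$, I would first move from $(x, r)$ to $(x, r')$ by changing only the last coordinate---this is a valid single-coordinate step since $(x, r') \in \mathsf{supp}(\hat{\mu})$---and then move from $(x, r')$ to $(x', r')$ by concatenating a connecting path for $x \to x'$ in the question graph of $H$ (which exists since $H$ is connected), keeping $r'$ fixed throughout; each such step changes exactly one of the first $k$ coordinates and stays in $\mathsf{supp}(\hat{\mu})$. Once connectedness of $\hat{H}$ is established, applying Theorem~\ref{thm:connected_pr} to $\hat{H}$ gives $\val(\hat{H}^{\otimes n}) \leq \exp(-\Omega_{\eps, \hat{H}}(n))$, and combined with $\val(H^{\otimes n}) = \val(\hat{H}^{\otimes n})$ and the fact that $\hat{H}$ is determined by $H$, this yields the claimed bound. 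There is essentially no obstacle beyond the bookkeeping for the connectedness check and the identification of strategies between $H$ and $\hat{H}$; the real content is simply the observation that the verifier's random tag can be externalized into an additional player whose answer is ignored.
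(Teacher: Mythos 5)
Your construction of the $(k{+}1)$-player game $\hat{H}$ with a dummy player receiving the random tag $r$, the identification $\val(\hat{H}^{\otimes n}) = \val(H^{\otimes n})$, and the reduction to Theorem~\ref{thm:connected_pr} is exactly the paper's argument. The only difference is that you spell out the connectedness check in more detail (and in the correct logical direction, whereas the paper's phrasing ``if $H'$ is connected then $H$ is connected'' appears to be a slip); this is a cosmetic rather than substantive distinction.
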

\begin{proof}
    We can think of a \rpred $k$-player game $H$ as a $(k+1)$-player game $H'$ as follows. In $H'$, the verifier selects the questions $(x^1, x^2,\ldots, x^k)$ from the game $H$ and $r\in [R]$ uniformly at random. The verifier sends $x^i$ to players $i$ for $i\in [k]$, and sends $r$ to player $k+1$. The player $i\in [k+1]$ responds with the answer $a^i$ ($a^{k+1}$ can be anything). The verifier's predicate in $H'$ is $V((x^1, x^2,\ldots, x^k, r), (a^1, a^2,\ldots, a^k, a^{k+1})) := V_r((x^1, x^2,\ldots, x^k), (a^1, a^2,\ldots, a^k))$.
    
    It is easy to observe that if the game $H'$ is connected then the game $H$ is connected. Furthermore, we have $\val(H^{\otimes n}) = \val(H'^{\otimes n})$ for any $n\geqslant 1$. Using this, the lemma follows from Theorem~\ref{thm:connected_pr}.
\end{proof}

In our proof, we will encounter \rpred games where the choice of the verifier's predicate $V_r$ is not uniform and may depend on the question $q$. The following claims says that we can assume that the distribution on verifier's predicate is uniform form a set of predicates and independent of the questions. This transformation preserves the support of the question-distribution.

\begin{claim}
\label{claim:uniform_rpred}
    Suppose $G$ is a \rpred $k$-player game where on the $k$-tuple of question $q\in \mathcal{X}_1\times \mathcal{X}_2\times \ldots \times \mathcal{X}_k$, the distribution on the verifier's predicate $V_r$ is sampled according to some distribution $\nu_q$ over $[R]$, then there is another game $H$ with the same distribution on the questions as in $G$ such that the verifier for $H$ samples a random predicate $\tilde{V}_m$ where $m\in [M]$ is distributed uniformly over $[M]$, and such that $\val(G^{\otimes n}) = \val(H^{\otimes n})$ for all $n\geqslant 1$. Furthermore, $M$ only depends on the size of the game $G$.
\end{claim}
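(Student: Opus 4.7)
The plan is to define the new predicates by clearing denominators. Because the game $G$ has finite size, every atom of $\mu(G)$ and every value $\nu_q(r)$ is a rational number with a bounded denominator. Let $M$ be a common positive integer denominator for the numbers $\{\nu_q(r) : q \in \mathsf{supp}(\mu(G)),\ r \in [R]\}$, so that $M \cdot \nu_q(r) \in \mathbb{Z}_{\geqslant 0}$ for all such $q,r$; such $M$ exists and depends only on the size of $G$. For every $q$ in the support, fix once and for all a partition $[M] = \bigsqcup_{r \in [R]} B_q^r$ with $|B_q^r| = M \cdot \nu_q(r)$. Now define, for each $m \in [M]$, a predicate $\tilde V_m$ on $\mathcal{X}_1 \times \cdots \times \mathcal{X}_k$ coupled with answers by setting
\[
\tilde V_m\bigl(q,(a^1,\ldots,a^k)\bigr) := V_r\bigl(q,(a^1,\ldots,a^k)\bigr) \quad \text{whenever } m \in B_q^r,
\]
for $q$ in the support (and define $\tilde V_m$ arbitrarily on questions outside the support, since they occur with probability $0$). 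Let $H$ be the \rpred game with the same question distribution $\mu(G)$, predicate index $m$ chosen uniformly from $[M]$ independently of the questions, and verification predicate $\tilde V_m$.

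To check $\val(G^{\otimes n}) = \val(H^{\otimes n})$, fix any deterministic strategy $\alpha = (\alpha^1,\ldots,\alpha^k)$, where $\alpha^i : \mathcal{X}_i^n \to \mathcal{A}_i^n$; note that $G$ and $H$ share question and answer sets, so the same strategies are available in both games. Conditioned on a question tuple $(q_1,\ldots,q_n)$ drawn from $\mu(G)^{\otimes n}$, the predicate indices in $G^{\otimes n}$ are drawn independently from $\nu_{q_1} \otimes \cdots \otimes \nu_{q_n}$, so the conditional acceptance probability factors as
\[
\prod_{j=1}^n \sum_{r \in [R]} \nu_{q_j}(r)\, V_r\bigl(q_j,\alpha|_{q_j}\bigr),
\]
while in $H^{\otimes n}$ the indices are uniform on $[M]^n$ and the conditional acceptance probability factors as
\[
\prod_{j=1}^n \frac{1}{M} \sum_{m \in [M]} \tilde V_m\bigl(q_j,\alpha|_{q_j}\bigr).
\]
Our choice of the partition $\{B_q^r\}_r$ makes the two inner sums coincide for every $q_j$ in the support, so the two acceptance probabilities are equal for every strategy, whence the two values coincide. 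Taking the supremum over $\alpha$ yields the claimed equality for every $n \geqslant 1$, and the support of the question distribution is preserved by construction.

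The only genuinely subtle point is finding a single $M$ that works for all $q$ simultaneously while keeping each $\tilde V_m$ a well-defined function on the full question–answer space; this is handled by the common-denominator argument above, and the reliance on the finite-size convention of the preliminaries (every atomic probability is a multiple of $1/M_0$) guarantees that $M$ depends only on the size of $G$. The independence across coordinates in $G^{\otimes n}$ (each $r_j$ depending only on $q_j$) is what permits the factorization into a per-coordinate identity, which is where the per-question matching of distributions is actually used.
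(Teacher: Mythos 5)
Your proposal is correct and follows essentially the same route as the paper: both clear denominators to find a single $M$ that works for all questions and then duplicate each predicate $V_r$ on question $q$ in proportion to $\nu_q(r)$ (your explicit partition $[M]=\bigsqcup_r B_q^r$ is just a precise way of saying ``make $M\cdot\nu_q(r)$ copies''). The only cosmetic difference is that you spell out the per-coordinate factorization that shows the repeated values coincide, a step the paper leaves implicit; note only that your notation $\alpha|_{q_j}$ should be read as the $j$-th coordinate of the answers under the strategy $\alpha$ applied to the full question vectors, which is determined once the questions are fixed, so the factorization goes through as stated.
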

\begin{proof}
Let $M\in \mathbb{Z}^+$ be a number such that for every question $q$ from the game $G$, each atom from $\mathsf{supp}(\nu_q)$ has probability weight $c/M$ for $1\leqslant c\leqslant M$. In the game $G$, for a question $q$ if $\nu_q(r) = c/M$, then in game $H$, we make $c$ copies $\tilde{V}_{i_1}(q,\cdot), \tilde{V}_{i_2}(q,\cdot), \ldots, \tilde{V}_{i_c}(q,\cdot)$ of the verifier predicate $V_r(q,\cdot)$ for the same question $q$. Thus, for a given question $q$, the verifier in $H$ samples a random $m\in [M]$ and decides based on the predicate $\tilde{V}_m(q,\cdot)$.
\end{proof}

\section{Proof of Theorem~\ref{thm:projection_pr}}

 Throughout this section, we fix a \rpred $k$-player projection game $(G, \mu, [R])$, succinctly written as $G$, on the questions from the set $\mathcal{X}_1\times \mathcal{X}_2\times \ldots \times \mathcal{X}_k$ and let $\mu(G)$ be the distribution on the questions in $G$. Using Claim~\ref{claim:uniform_rpred}, we can assume without loss of generality, that along with the $k$-tuple questions $(x^1, x^2, \ldots, x^k)\sim \mu$, the verifier selects $r\in [R]$ uniformly at random, and after getting answers $(a^1, a^2,\ldots, a^k)$ from the players, applies the predicate $V_r((x^1, x^2, \ldots, x^k)), (a^1, a^2,\ldots, a^k))$. 
 
 The key idea is to use the {\em path-trick} from~\cite{CSP2} to relate the value of the game $G$ (and $G^{\otimes n}$) to another game $H$ (and $H^{\otimes n}$) which is connected.

\subsection{The path-trick and the \texorpdfstring{$i$}{i}-links}

In this section, we define the notion of a {\em link} which is analogous to the notion of the path-trick~\cite{CSP2} that was used in connection to studying dictatorship tests towards showing hardness of approximation of constraint satisfaction problems.

Fix a player $i\in [k]$. An $i$-link from a game $G$ is an ordered pair of original $k$-tuple of questions from $G$ with possible repetition. We will induce the following distribution on $i$-links from $G$.

\begin{itemize}

    \item Pick a question $v$ to player $i$ according to the distribution of $\mu|_{i}$ and sample two $k$-tuple of questions $q = (x^1, x^2, \ldots, x^k)$ and  $q' = (y^1, y^2, \ldots, y^k)$, independently from $\mu$ but conditioned on $x^i = y^i = v$ and output $(q, q')$.
\end{itemize} 
We denote the above distribution on the $i$-links with $L_i(G)$.

To see the utility of $i$-links, the following claim shows that the distribution $L_i(G^{\otimes n})$ on the $i$-links in $G^{\otimes n}$ is the same as the product distribution on the $i$-links from $G$.
\begin{claim}
\label{claim:identical_dist}
    For every game $G$, $i\in [k]$, and $n \geqslant 1$, the following two distributions are identical.
\begin{enumerate}
    \item The distribution $L_i(G^{\otimes n})$.
    \item The distribution $\mathcal{D}_i$ on the $i$-links from $G^{\otimes n}$ defined as follows:
    \begin{itemize}
        \item For each $j\in [n]$, independently sample $(q_j, q'_j)$ from the distribution $L_i(G)$ where $q_j = (x^1_{j}, x^2_{j},\ldots, x^k_{j})$  and $q'_j = (y^1_{j}, y^2_{j},\ldots, y^k_{j})$.
        \item Let $\vec{q} = (q_1, q_2, \ldots, q_n)$ and $\vec{q'} = (q'_1, q'_2, \ldots, q'_n)$. Output $(\vec{q}, \vec{q'})$.
    \end{itemize} 
\end{enumerate}
\end{claim}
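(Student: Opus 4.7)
The plan is to unpack both distributions and observe that the sampling procedures are just two different orderings of the same product-over-coordinates experiment.

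First I would recall that in $G^{\otimes n}$ the question distribution is $\mu^{\otimes n}$, so the marginal on player $i$ is $(\mu|_i)^{\otimes n}$. Then to sample $(\vec{q},\vec{q'}) \sim L_i(G^{\otimes n})$ one: (a) draws $\vec{v} = (v_1,\dots,v_n) \sim (\mu|_i)^{\otimes n}$, and (b) independently draws $\vec{q},\vec{q'} \sim \mu^{\otimes n}$ conditioned on $\vec{q}|_i = \vec{q'}|_i = \vec{v}$. Here I am using the convention that $\vec{q}|_i \in \mathcal{X}_i^n$ is the vector of player $i$'s questions across the $n$ coordinates of $\vec{q}$.

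The key observation is that $\mu^{\otimes n}$ is a product distribution, so conditioning on a product event $\{\vec{q}|_i = \vec{v}\} = \bigcap_j \{q_j|_i = v_j\}$ factorizes: the conditional distribution of $\vec{q}$ is exactly the product over $j \in [n]$ of the distribution of $q_j \sim \mu$ conditioned on $q_j|_i = v_j$. Thus steps (a)--(b) above are equivalent to the following rearranged experiment: for each $j \in [n]$ independently, sample $v_j \sim \mu|_i$, and then sample $q_j,q'_j \sim \mu$ independently, each conditioned on $q_j|_i = v_j$ and $q'_j|_i = v_j$ respectively, and finally form $\vec{q} = (q_1,\dots,q_n)$, $\vec{q'} = (q'_1,\dots,q'_n)$.

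But the inner experiment for a single $j$ is, by definition, a sample from $L_i(G)$. So the rearranged procedure is exactly the definition of $\mathcal{D}_i$, proving the two distributions are identical. The only thing to be slightly careful about is the factorization-of-conditioning step; I would spell this out by writing the probability of any atom $(\vec{q},\vec{q'})$ under both distributions and checking they coincide coordinate by coordinate, using the fact that both the marginals and the conditionals on player $i$'s questions split as products. There is no real obstacle here — the claim is essentially the statement that "sampling an $i$-link of an $n$-fold product game commutes with taking the $n$-fold product of $i$-links," and the proof is a direct computation once one commits to writing the densities.
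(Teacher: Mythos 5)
Your proof is correct and takes essentially the same approach as the paper: both argue that the link distribution on $G^{\otimes n}$ factorizes coordinate-by-coordinate because $\mu^{\otimes n}$ is a product and the conditioning event $\{\vec{q}|_i = \vec{v}\}$ is a product of per-coordinate events. You are a bit more explicit about the factorization-of-conditioning step, which the paper largely leaves to the reader by displaying the two $k\times n$ arrays and noting that each column shares a pivot; but the content is the same.
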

\begin{proof}
    Note that $\vec{q}$ and $\vec{q'}$, sampled from $\mathcal{D}_i$,  are the following $k$-tuple of questions
    \begin{align*}
    \begin{array}{cc}
        (x^1_{1}, x^1_{2}, \ldots, x^1_{n}) &   (y^1_{1}, y^1_{2}, \ldots, y^1_{n})\\
        (x^2_{1}, x^2_{2}, \ldots, x^2_{n}) & (y^2_{1}, y^2_{2}, \ldots, y^2_{n})\\
        \vdots & \vdots \\
        \underbrace{(x^k_{1}, x^k_{2}, \ldots, x^k_{n})}_{\vec{q}} &  \underbrace{(y^k_{1}, y^k_{2}, \ldots, y^k_{n})}_{\vec{q'}}\\
        \end{array}
    \end{align*}
    
    For {\em each} $j\in [n]$, the pair of $k$-tuple of questions $(x^1_{j}, x^2_{j}, \ldots, x^k_{j})$ and $(y^1_{j}, y^2_{j}, \ldots, y^k_{j})$ share a common pivot question $p_j = x^i_j = y^i_j$. This means that the question-pair $(\vec{q}, \vec{q'})$ share a common $n$-tuple question $\vec{p}$ from player $i$, where we think of $\vec{q}, \vec{q'}$ as  questions from the game $G^{\otimes n}$. This precisely corresponds to the distribution $L_i(G^{\otimes n})$.
\end{proof}

 Consider the game $G$ the assignments $\alpha^i :\mathcal{X}_i \rightarrow \mathcal{A}_i$ for $i\in [k]$. We say that the link $(q, q')$ from the game $G$ is {\em $r$-consistent} with respect to the global assignments $(\alpha^1, \alpha^2,\ldots, \alpha^k)$ if $q$ as well as $q'$ are satisfied by the predicate $V_r$ on the assignments $(\alpha^1, \alpha^2,\ldots, \alpha^k)$.

\begin{claim}
\label{claim:path}
    Let $n\geqslant 1$, $(\alpha^1, \alpha^2,\ldots, \alpha^k)$  be a strategy for $(G,\mu, [R])$ with $\val(G)\geqslant \eps$. Then with probability at least  $\eps^{2}$, the link $(q,q')$ is $r$-consistent with the assignments $(\alpha^1, \alpha^2,\ldots, \alpha^k)$, where the probability is over $(q,q')$ sampled according to $L_i(G)$ and $r\in [R]$ uniformly at random.
\end{claim}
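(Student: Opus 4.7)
The plan is to directly unfold the probability of $r$-consistency using the definition of $L_i(G)$ and then apply Cauchy--Schwarz. To sample $(q,q') \sim L_i(G)$, we first draw a pivot $v \sim \mu|_{i}$ and then draw $q$ and $q'$ independently from $\mu$ conditioned on $q|_i = q'|_i = v$. Since $r$ is drawn uniformly from $[R]$ independently of the link, the probability that $(q,q')$ is $r$-consistent with $(\alpha^1,\ldots,\alpha^k)$ equals
$$\E_{v \sim \mu|_{i}}\, \E_{r \in [R]}\, \E_{q,q' \mid v}\!\bigl[V_r(q, \alpha|_q)\, V_r(q', \alpha|_{q'})\bigr] \;=\; \E_{v, r}\bigl[p_{v,r}^2\bigr],$$
where $p_{v,r} := \E_{q \mid v}[V_r(q, \alpha|_q)]$ and the last equality uses that $q, q'$ are conditionally independent given $v$.

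Next I would apply Cauchy--Schwarz (equivalently Jensen's inequality for $x \mapsto x^2$) to the outer expectation over $(v,r)$, giving $\E_{v,r}[p_{v,r}^2] \geq (\E_{v,r}[p_{v,r}])^2$. Since the marginal of $q \sim \mu$ onto player $i$ is $\mu|_{i}$, the inner average collapses to the strategy's value:
$$\E_{v,r}[p_{v,r}] \;=\; \E_{v \sim \mu|_{i}}\, \E_r\, \E_{q \mid v}[V_r(q, \alpha|_q)] \;=\; \E_{q \sim \mu,\, r \in [R]}\bigl[V_r(q, \alpha|_q)\bigr] \;\geq\; \eps,$$
where the last inequality is the hypothesis that $(\alpha^1,\ldots,\alpha^k)$ achieves value at least $\eps$ in $G$. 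Combining the two estimates yields the claimed bound of $\eps^2$.

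There is essentially no obstacle here beyond correctly bookkeeping the joint distribution: the key structural fact is that $r$ is independent of $(q,q')$ and that, conditional on the pivot $v$, the two questions $q, q'$ are independent, which is exactly what lets $V_r(q)\cdot V_r(q')$ factorize as $p_{v,r}^2$ after taking the inner expectation. This is the same pivoting-plus-Cauchy--Schwarz step that appears in the proof outline when bounding $\val(\mathcal{T}^1_3(G_\ell)^{\otimes n})$, specialized here to the base game with the additional random choice of predicate indexed by $r$.
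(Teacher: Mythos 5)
Your proof is correct and is essentially the same argument as the paper's: both pivot on the common question $v \sim \mu|_i$ to player $i$, observe that conditionally on $(v,r)$ the product $V_r(q,\alpha|_q)\,V_r(q',\alpha|_{q'})$ factorizes because $q,q'$ are i.i.d.\ given $v$, and then apply Cauchy--Schwarz (Jensen for $x\mapsto x^2$) over the choice of $(v,r)$ to compare $\E[p_{v,r}^2]$ with $(\E[p_{v,r}])^2 \geq \eps^2$. The paper writes the same chain of inequalities starting from $\eps^2$ and working forward to the consistency probability, whereas you start from the consistency probability and work back to $\eps^2$, but this is a purely cosmetic reversal of the same computation.
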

\begin{proof}
    Fix the provers' strategies $\alpha^i :\mathcal{X}_i \rightarrow \mathcal{A}_i$ for $i\in [k]$ with value at least $\eps$. We have,
$$ \E_{\substack{(x^1, x^2, \ldots, x^k)\sim \mu,\\ r\in [R]}}[ V_r((x^1, x^2,\ldots, x^k), (\alpha^1(x^1), \alpha^2(x^2),\ldots, \alpha^k(x^k))) ] \geqslant \eps.$$
Using the Cauchy-Schwarz inequality, we have,
\begin{align*}
    \eps^2 &\leqslant \E_{\substack{(x^1, x^2, \ldots, x^k)\sim \mu,\\ r\in [R]}}[V_r((x^1, x^2,\ldots, x^k), (\alpha^1(x^1), \alpha^2(x^2),\ldots, \alpha^k(x^k)))   ] ^2\\
    & = \left(\E_{ \substack{v\in \mu|_{i}\\ r\in [R]}} \left[\E_{\substack{(x^1, x^2, \ldots, x^k)\sim \mu\\ x^i = v}}[V_r((x^1, x^2,\ldots, x^k), (\alpha^1(x^1), \alpha^2(x^2),\ldots, \alpha^k(x^k)))\right]\right)^2\\\
\end{align*}
\begin{align*}
    &\leqslant \E_{\substack{v\in \mu|_{i}\\ r\in [R]}} \left[\E_{\substack{(x^1, x^2, \ldots, x^k)\sim \mu\\ x^i = v}}[V_r((x^1, x^2,\ldots, x^k), (\alpha^1(x^1), \alpha^2(x^2),\ldots, \alpha^k(x^k)))]\right]^2\quad\quad\quad\tag*{(Cauchy-Schwarz)}\\
    &=  \E_{\substack{v\in \mu|_{i}\\ r\in [R]}} \left[\E_{\substack{(x^1, x^2, \ldots, x^k)\sim \mu,\\ (y^1, y^2, \ldots, y^k)\sim \mu,\\ x^i = y^i = v}}\left[\begin{array}{c}V_r((x^1, x^2,\ldots, x^k), (\alpha^1(x^1), \alpha^2(x^2),\ldots, \alpha^k(x^k)))\cdot\\ V_r((y^1, y^2,\ldots, y^k), (\alpha^1(y^1), \alpha^2(y^2),\ldots, \alpha^k(y^k)))\end{array}\right] \right].
\end{align*}
 The expression inside the expectation above is precisely the probability that the $i$-link $((x^1, x^2, \ldots, x^k),$ $(y^1, y^2, \ldots, y^k))$ sampled from the distribution $L_i(G)$ is $r$-consistent with respect to the assignments $(\alpha^1, \alpha^2, \ldots, \alpha^k)$. This shows that
$$\Pr_{\substack{(q,q')\sim L_i(G)\\ r\in [R]}} \left[ (q,q') \mbox{ is $r$-consistent with respect to the assignments } (\alpha^1, \alpha^2, \ldots, \alpha^k) \right]\geqslant \eps^2,$$
and this completes the proof.
\end{proof}

\subsection{The transformations \texorpdfstring{$\mathcal{T}^i_p$}{}} 
 We are now ready to define the transformation on the game $G$ that was alluded to at the beginning of this section. We denote the transformed games by $\mathcal{T}^i_p(G)$ for $1\leqslant i,p\leqslant k$. 

The distribution on the $k$-tuple of questions in the game $\mathcal{T}^i_p(G)$ is over $\mathcal{X}_1\times \mathcal{X}_2\times \ldots \times \mathcal{X}_k$ which is defined as follows.
\begin{enumerate}
    \item The verifier samples a link $(q, q')\sim L_i(G)$ where $q = (x^1, x^2, \ldots, x^k)$ and  $q' = (y^1, y^2, \dots, y^k)$
    \item The verifier constructs a $k$-tuple of question by taking $x^p$ from $q$ and $(y^1, \ldots, y^{p-1}, y^{p+1},\dots,  y^k)$ from $q'$. Succinctly, we denote this operation as $(y^1,\ldots,y^{p-1}, x^p, y^{p+1}, \ldots, y^k) := \Pi^p((q,q'))$ (the $p$ stands for taking player $p$'s question from the first question and the remaining players' questions from the second question). 
\end{enumerate}

Before we define the set of satisfying assignments in the transformed game, we first define the set of $r$-consistent assignments, where $r\in [R]$, to an $i$-link. An assignment to a link $(q,q')$ is an assignment to both $q$ and $q'$ (note that each $k$-tuple of question receives a separate assignment from $\mathcal{A}_1\times \mathcal{A}_2\times \ldots\times \mathcal{A}_k$). For an $i$-link $(q,q')$, we define the set of $r$-consistent assignments to $(q,q')$ as follows. An $r$-consistent assignment to an $i$-link $(q,q')$ is an assignment $\sigma$ to the link such that
\begin{itemize}
    \item the verifier accepts $\sigma|_q$ on $q$ and $\sigma|_{q'}$ on $q'$ according to the predicate $V_r$, and
    \item  $\sigma|_{q|_i} = \sigma|_{q'|_i} $, i.e., $\sigma$ gives the same value to the common question $v$ to the player $i$ from $q$ and $q'$.
\end{itemize}
Note that an $i$-link $(q, q')$ may share more than one common question to the players, but the $r$-consistency only cares about the question to player $i$. 

We now define the set of accepting assignments for a $k$-tuple of questions in the \rpred transformed game. The verifier chooses $r\in [R]$ uniformly at random. Suppose a $k$-tuple of questions $\tilde{q} = (z^1, z^2, \ldots, z^k)$ is coming from an $i$-link $(q, q')$, i.e., $\tilde{q} = \Pi^p((q,q'))$, then 
\begin{itemize}
    \item Case 1: If $q\neq q'$, then accept by default.
    \item Case 2: If $q = q'$, then $\tilde{q} = q$. In this case, accept according to the verifier's predicate $V_r$ from the game $G$ on question $\tilde{q}$.
\end{itemize}

Note that the game $\mathcal{T}^i_p(G)$ is a \rpred $k$-player game (the accepting answers for a question $\tilde{q}$ depends on the underlying sampled link as well as the sampled $r\in [R]$), and as described, the distribution on the underlying predicate is not uniform among the set of predicates. However, using Claim~\ref{claim:uniform_rpred}, without loss of generality, we can assume that the underlying distribution on the predicate that the verifier applies in  $\mathcal{T}^i_p(G)$ is uniform from a set of predicates (and independent of the questions). We need this as we will be applying a series of these transformations on the original game, and the transformation above is only defined on games where the verifier's predicate is uniform and independent of the question $q$.

In the transformed game, the verifier either accepts all the answers or accepts answers based on the predicates from the original game $G$. We have the following simple, but important, fact.
\begin{fact}
\label{fact:proj_preserve}
    If $G$ is a \rpred $k$-player projection game, then for every $i,p\in [k]$, the game $\mathcal{T}^i_p(G)$ is also a \rpred projection game.
\end{fact}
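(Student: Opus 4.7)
The plan is to verify the projection condition directly from the case analysis that defines the verifier of $\mathcal{T}^i_p(G)$. After sampling a link $(q,q')\sim L_i(G)$ and forming the question $\tilde{q}=\Pi^p((q,q'))$, the verifier's predicate on $\tilde{q}$ is one of exactly two things: (i) the constant \emph{accept-all} predicate (when $q\neq q'$), or (ii) the predicate $V_r$ of the original game $G$ applied to $q=\tilde{q}$ (when $q=q'$). I will show that each of these is itself a projection predicate; since the projection property is a per-predicate, per-question condition, this will yield that the random-predicate game $\mathcal{T}^i_p(G)$ is projection.

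For case (i), the accept-all predicate is trivially a projection predicate: set $D_{\tilde{q}}=1$ and let every projection $\sigma^j_{\tilde{q}}:\mathcal{A}_j\to[1]$ be the constant map to $1$. Then the consistency condition $\sigma^j_{\tilde{q}}(a^j)=\sigma^{j'}_{\tilde{q}}(a^{j'})$ holds for all answer tuples, matching the fact that every $(a^1,\ldots,a^k)$ is accepted. For case (ii), the predicate applied is exactly $V_r$ from $G$ on the question $q=\tilde{q}$, and since $G$ is a random-predicate projection game by hypothesis, the projections $\sigma^j_{q,r}$ witnessing the projection property of $V_r(q,\cdot)$ in $G$ can be reused verbatim as witnesses for $\tilde{q}$ in $\mathcal{T}^i_p(G)$.

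Finally, as remarked in the paragraph just before the fact, the distribution over predicates in $\mathcal{T}^i_p(G)$ as described is not uniform and depends on the question, so one invokes Claim~\ref{claim:uniform_rpred} to convert to the canonical uniform random-predicate form. This conversion only duplicates predicates per question (each duplicate is identical as a predicate on answer tuples), so the projection property of each underlying predicate is preserved. Putting the two cases together establishes that $\mathcal{T}^i_p(G)$ is a random-predicate projection game.

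No step here is a genuine obstacle; the only point requiring care is recognizing that the projection property needs to be checked on a per-predicate basis (rather than requiring a single $(D_{\tilde{q}},\sigma^j_{\tilde{q}})$ that works across all predicates for the same $\tilde{q}$), which is exactly the content of the definition in the random-predicate setting.
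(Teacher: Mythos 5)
Your argument is correct and matches the paper's reasoning: the paper states this as a fact without a separate proof, but the sentence immediately preceding it (``the verifier either accepts all the answers or accepts answers based on the predicates from the original game $G$'') is exactly the two-case analysis you carry out, with the accept-all predicate being a projection predicate via $D_{\tilde q}=1$ and constant $\sigma^j_{\tilde q}$, and the other case reusing the projection witnesses of $V_r$ from $G$. Your added observation that the normalization of Claim~\ref{claim:uniform_rpred} merely duplicates predicates, and therefore preserves the per-predicate, per-question projection property, is also correct and fills in the one small step the paper leaves implicit.
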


\subsubsection{Properties of the transformations \texorpdfstring{$\mathcal{T}^i_p$}{}} 

We start with the first claim that shows that the value of the transformed games is less than $1$ if the value of $G$ is less than $1$.

\begin{claim}
\label{claim:gameGH_ub}
    Fix any $k$-player game $(G, \mu, [R])$. For every $\eps\in (0,1)$ and $i,p \in [k]$, if $\val(G) = 1-\eps$, then $\val(\mathcal{T}^i_p(G))=1-\eps'$ where $\eps'>0$ that depends on $\eps$ and the size of the game $G$.
\end{claim}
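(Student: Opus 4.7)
The plan is to show that for every strategy $\alpha = (\alpha^1,\ldots,\alpha^k)$ with $\alpha^j:\mathcal{X}_j\to\mathcal{A}_j$, the verifier in $\mathcal{T}^i_p(G)$ rejects with probability at least $\eps/M$, where $M$ is the denominator parameter from the definition of the size of $G$ (so $\mu(q)\geqslant 1/M$ for every $q\in\mathsf{supp}(\mu(G))$). Strategies for $\mathcal{T}^i_p(G)$ are syntactically the same objects as strategies for $G$, so the hypothesis $\val(G)=1-\eps$ applies to $\alpha$ directly, yielding $\Pr_{q\sim\mu,\, r\in[R]}[V_r(q,\alpha|_q)=0]\geqslant \eps$.

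Next I would observe that by construction of $\mathcal{T}^i_p(G)$, the only way the verifier can reject is when the underlying $i$-link $(q,q')$ satisfies $q=q'$ (Case 2 of the transformation), since in Case 1 the verifier accepts by default. Consequently
\begin{equation*}
\val_\alpha(\mathcal{T}^i_p(G)) \;=\; 1 \;-\; \Pr_{(q,q')\sim L_i(G),\, r}\bigl[q=q'\text{ and }V_r(q,\alpha|_q)=0\bigr].
\end{equation*}
A direct computation from the definition of $L_i(G)$ yields $\Pr_{L_i(G)}[q=q'=q_0] = \mu(q_0)^2/\mu|_i(q_0|_i)$; combining $\mu|_i(q_0|_i)\leqslant 1$ with $\mu(q_0)\geqslant 1/M$ shows this is at least $\mu(q_0)/M$. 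Summing over $q_0\in\mathsf{supp}(\mu(G))$ gives
\begin{equation*}
\Pr\bigl[\text{reject in }\mathcal{T}^i_p(G)\bigr] \;\geqslant\; \frac{1}{M}\sum_{q_0}\mu(q_0)\cdot\Pr_{r}[V_r(q_0,\alpha|_{q_0})=0] \;=\; \frac{1}{M}\,\Pr_{q\sim\mu,\, r}[V_r(q,\alpha|_q)=0] \;\geqslant\; \frac{\eps}{M}.
\end{equation*}
Since this bound is uniform over strategies $\alpha$, we conclude $\val(\mathcal{T}^i_p(G))\leqslant 1-\eps/M$, proving the claim with $\eps':=\eps/M$.

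I do not anticipate a genuine obstacle; the argument is essentially a Markov-style bound that trades a factor depending on the game size for the reduction from $G$ to $\mathcal{T}^i_p(G)$. The only subtlety worth flagging is that the distribution of $q$ under $L_i(G)$ conditioned on $q=q'$ is not $\mu$ itself but is reweighted by $\mu(q)/\mu|_i(q|_i)$; controlling this reweighting via the crude bound $\mu|_i(q|_i)\leqslant 1$ together with $\mu(q)\geqslant 1/M$ is precisely where the factor $1/M$ enters, and it is the reason $\eps'$ is allowed to depend on the size of $G$ rather than only on $\eps$.
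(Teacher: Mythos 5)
Your proof is correct and takes essentially the same approach as the paper: both hinge on the observation that whenever the sampled $i$-link is a trivial link $(q,q)$, the transformed game plays $G$ on $q$, and that $\Pr_{L_i(G)}[q=q'=q_0]\geqslant\mu(q_0)/M$ for each $q_0$, which forces the value below $1$ by an $\eps/M$ margin. The paper packages this as a convex decomposition of $\mathcal{T}^i_p(G)$ into $\delta\,G+(1-\delta)\,G'$ with $\delta$ depending on the size of $G$, whereas you bound the rejection probability directly; the content is identical, and your explicit computation of $\Pr_{L_i(G)}[q=q'=q_0]$ makes precise a step the paper leaves implicit.
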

\begin{proof}
    First, observe that for every question-tuple $q=(x^1, x^2, \ldots,x^k)$ from the game $G$, the $i$-link $(q,q)$ is present in the support of  $L_i(G)$. Therefore, for every question-tuple $q=(x^1, x^2, \ldots,x^k)$ from the game $\mathcal{T}^i_p(G)$ that is given by the $i$-link $(q, q)$, the verifier of the transformed game selects $r\in [R]$ uniformly at random and uses the predicate $V_r$. Therefore, the transformed game $\mathcal{T}^i_p(G)$ is a convex combination of the original game $G$ and another game $G'$. Let $\tilde{\mu}$ be the distribution on question-tuples in $\mathcal{T}^i_p(G)$, then it can be written as $\tilde{\mu}= \delta\mu + (1-\delta)\mu'$, where $\mu'$ corresponds to the distribution of questions from game $G'$ and $\delta\in (0,1]$ that depends on the size of the game $G$. Thus,
    $$\val(\mathcal{T}^i_p(G)) \leqslant \delta\cdot \val(G) + (1-\delta) = \delta(1-\eps) + (1-\delta)  = 1-\eps\delta < 1.$$
\end{proof}

The following claim relates the value of the original game with the value of the transformed games. This claim crucially uses the fact that the original game $G$ is a projection game.
\begin{claim}
\label{claim:gameGH_lb}
   For any $k$-player projection game $(G, \mu, [R])$, $n\geqslant 1$, and $i,p \in[k]$, $\val(\mathcal{T}^i_p(G)^{\otimes n}) \geqslant \val(G^{\otimes n})^2$.
\end{claim}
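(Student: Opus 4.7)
The plan is to lift the Cauchy--Schwarz argument of Claim~\ref{claim:path} to the $n$-fold setting, and then to use the projection property coordinate-wise to convert simultaneous acceptance of a pair of correlated $n$-fold question tuples in $G^{\otimes n}$ into acceptance of the corresponding question tuple in $\mathcal{T}^i_p(G)^{\otimes n}$. Let $\eps = \val(G^{\otimes n})$ and fix a strategy $\alpha = (\alpha^1, \ldots, \alpha^k)$ with $\alpha^s \colon \mathcal{X}_s^n \to \mathcal{A}_s^n$ attaining this value. Abbreviate $V_{\vec{r}}(\vec{q}, \alpha|_{\vec{q}}) := \prod_{j=1}^n V_{r_j}(q_j, \alpha|_{q_j})$, so that the hypothesis reads $\E_{\vec{q}, \vec{r}}[V_{\vec{r}}(\vec{q}, \alpha|_{\vec{q}})] \geqslant \eps$, with $\vec{q} \sim \mu^{\otimes n}$ and $\vec{r}$ uniform on $[R]^n$.

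First, I would condition on player $i$'s $n$-vector $\vec{v} := \vec{q}|_i$ and on $\vec{r}$, apply Cauchy--Schwarz exactly as in Claim~\ref{claim:path}, and re-introduce an independent copy $\vec{q}'$ of $\vec{q}$ to obtain
\begin{align*}
\eps^2 \leqslant \E_{\vec{v}, \vec{r}}\bigl[\E_{\vec{q}, \vec{q}' \colon \vec{q}|_i = \vec{q}'|_i = \vec{v}}\bigl[V_{\vec{r}}(\vec{q}, \alpha|_{\vec{q}}) \cdot V_{\vec{r}}(\vec{q}', \alpha|_{\vec{q}'})\bigr]\bigr].
\end{align*}
By Claim~\ref{claim:identical_dist} the joint distribution on $(\vec{q}, \vec{q}')$ is $L_i(G^{\otimes n}) = L_i(G)^{\otimes n}$, so the coordinate links $(q_j, q'_j)$ are i.i.d.\ from $L_i(G)$ and the $r_j$ are i.i.d.\ uniform on $[R]$. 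Let $E$ denote the event that both product predicates above equal $1$; then $\Pr[E] \geqslant \eps^2$.

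Next, I would play $\mathcal{T}^i_p(G)^{\otimes n}$ with the \emph{same} strategy $\alpha$. In coordinate $j$ of the transformed game, the verifier samples an independent link $(q_j, q'_j) \sim L_i(G)$ and $r_j \in [R]$ and asks $\tilde{q}_j := \Pi^p((q_j, q'_j))$; under $\alpha$, player $p$ receives the $n$-vector $(q_j|_p)_j$ and each player $s \neq p$ receives $(q'_j|_s)_j$. The resulting coordinate-$j$ answer tuple is the hybrid $\beta_j$ with $\beta_j|_p = \alpha^p(\vec{q}|_p)_j$ and $\beta_j|_s = \alpha^s(\vec{q}'|_s)_j$ for $s \neq p$; since $\vec{q}|_i = \vec{q}'|_i = \vec{v}$, one has $\beta_j|_i = \alpha^i(\vec{v})_j$ regardless of whether $i = p$. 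I then claim that on $E$, the transformed verifier accepts in every coordinate. If $q_j \neq q'_j$, Case~1 of $\mathcal{T}^i_p(G)$ accepts by default. If $q_j = q'_j$, then $\tilde{q}_j = q_j$ and the two full answer tuples $\vec{a} := (\alpha^s(\vec{q}|_s)_j)_s$ and $\vec{b} := (\alpha^s(\vec{q}'|_s)_j)_s$ are both accepted by $V_{r_j}$ on $q_j$ (by $E$) and agree on slot $i$. By the projection property of $V_{r_j}$ on $q_j$, $\vec{a}$ and $\vec{b}$ therefore lie in the same complete $k$-partite component of the accepting hypergraph, so every tuple agreeing with them on slot $i$ lies in that same component; in particular $\beta_j$ does, and is accepted. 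Consequently $\val(\mathcal{T}^i_p(G)^{\otimes n}) \geqslant \Pr[E] \geqslant \eps^2 = \val(G^{\otimes n})^2$, as required.

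The only genuine ingredient beyond Cauchy--Schwarz and Claim~\ref{claim:identical_dist} is the hybridization step in the case $q_j = q'_j$, and this is precisely where the projection hypothesis is indispensable: without it, two accepting tuples sharing one coordinate need not sit in a common complete $k$-partite component, and the mixed tuple $\beta_j$ could fail $V_{r_j}$. Everything else is a direct $n$-fold translation of the single-round manipulation already carried out in Claim~\ref{claim:path}, with the key additional observation (Claim~\ref{claim:identical_dist}) that the conditional distribution on pairs behaves as a product of single-coordinate links.
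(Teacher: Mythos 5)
Your proof is correct and follows essentially the same route as the paper: Cauchy--Schwarz conditioned on player $i$'s $n$-fold question, the product structure $L_i(G^{\otimes n}) = L_i(G)^{\otimes n}$ via Claim~\ref{claim:identical_dist}, and a coordinate-wise use of the projection property to show the hybrid answer tuple is accepted whenever $q_j = q'_j$. The only cosmetic differences are that the paper packages the Cauchy--Schwarz step by invoking Claim~\ref{claim:path} applied to the game $G^{\otimes n}$ rather than re-deriving it inline, and reduces to $p=1$ by symmetry, whereas you handle general $p$ directly.
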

\begin{proof}
 As the statement of the claim is symmetric with respect to $p\in [k]$, we prove the claim when $p=1$. For other $p$, the proof is similar.
 
  Using Claim~\ref{claim:identical_dist}, the game $\mathcal{T}^i_1(G)^{\otimes n}$ can be described as follows: Sample an $i$-link $(\vec{q}, \vec{q}')$ from the distribution $L_i(G^{\otimes n})$, sample $\vec{r}\in [R]^n$ uniformly at random, and for every $j\in [n]$ such that $q_j = q'_j$, apply the predicate $V_{r_j}$ for the question $\Pi^1((q_j, q_j))$. 
    
    Let's fix the players' strategy $\alpha:=(\alpha^1,\alpha^2, \ldots, \alpha^k)$ for the game $G^{\otimes n}$ that gives the value $\val(G^{\otimes n})$. For an $i$-link $(\vec{q}, \vec{q}')$ from $G^{\otimes n}$, consider the assignments $\alpha|_{\vec{q}}$ and $\alpha|_{\vec{q}'}$ to the link $(\vec{q}, \vec{q}')$. Using Claim~\ref{claim:path}, for a random $\vec{r}\in [R]^n$ and a randomly selected $i$-link $(\vec{q}, \vec{q}')$, $(\vec{q}, \vec{q}')$ is $\vec{r}$-consistent with respect to the global assignment $\alpha$ with probability at least $\val(G^{\otimes n})^{2}$. When this happens, we show that the assignment $\alpha$ satisfies the constraint on $\Pi^1((\vec{q},\vec{q}'))$ from the game $\mathcal{T}^i_1(G)^{\otimes n}$. Indeed, pick any $j\in [n]$ such that the $j^{th}$ coordinate of the link $(\vec{q}, \vec{q}')$ is $(q_j, q_j)$ (i.e, the same question-tuple). Here, the verifier is using the predicate $V_{r_j}$ on the question $q_j$ in the game $\mathcal{T}^i_1(G)^{\otimes n}$. If the link $(\vec{q}, \vec{q}')$, is $\vec{r}$-consistent with respect to the global assignment $\alpha$, then we have the following
    $$V_{r_j} ((q_j|_1, q_j|_2, \ldots, q_j|_k),(\alpha^1(\vec{q}|_1)_j, \alpha^2(\vec{q}|_2)_j, \ldots, \alpha^k(\vec{q}|_k)_j)) = 1,$$
   $$V_{r_j} ((q_j|_1, q_j|_2, \ldots, q_j|_k),(\alpha^1(\vec{q'}|_1)_j, \alpha^2(\vec{q'}|_2)_j, \ldots, \alpha^k(\vec{q'}|_k)_j)) = 1.$$
   In game $\mathcal{T}^i_1(G)^{\otimes n}$, the verifier is checking the following condition in the coordinate $j$.
   $$V_{r_j} ((q_j|_1, q_j|_2, \ldots, q_j|_k),(\alpha^1(\vec{q}|_1)_j, \alpha^2(\vec{q'}|_2)_j, \ldots, \alpha^k(\vec{q'}|_k)_j)) = 1.$$
   Because $G$ and the predicates $V_r$ satisfy the projection property, and $\vec{q}|_i = \vec{q'}|_i$ because $(\vec{q}, \vec{q'})$ is an $i$-link, we see that if $(\alpha^1(\vec{q}|_1)_j, \alpha^2(\vec{q}|_2)_j, \ldots, \alpha^k(\vec{q}|_k)_j)$ and $(\alpha^1(\vec{q'}|_1)_j, \alpha^2(\vec{q'}|_2)_j, \ldots, \alpha^k(\vec{q'}|_k)_j)$ are the accepting answers for a question according to the predicate $V_{r_j}$, then $(\alpha^1(\vec{q}|_1)_j, \alpha^2(\vec{q'}|_2)_j, \ldots, \alpha^k(\vec{q'}|_k)_j)$ is also an accepting answer for the same question according to the same predicate $V_{r_j}$. This shows that the assignment $\alpha$ passes the verifier's check on all $j\in [n]$ such that  the $j^{th}$ coordinate of the link $(\vec{q}, \vec{q}')$ is $(q_j, q_j)$. For the other coordinates, the game $\mathcal{T}^i_1(G)$ always accepts.

    Hence the same players' strategy $\alpha$ gives $\val(\mathcal{T}^i_1(G)^{\otimes n}) \geqslant \val(G^{\otimes n})^2$.
\end{proof}

Finally, we compose these transformations to get a connected game, starting with a loosely-connected game. Towards this, for any string $\beta\in ([k]\times [k])^m$, where $\beta_j = (\beta^1_j, \beta^2_j)\in [k]\times [k]$, define the transformation $\mathcal{T}^\beta(G)$ as the following transformation
$$\mathcal{T}^{\beta^1_m}_{\beta^2_m}(\ldots(\mathcal{T}^{\beta^1_2}_{\beta^2_2}(\mathcal{T}^{\beta^1_1}_{\beta^2_1}(G)))).$$
For a string $\beta$ of length $m$, define a string $\beta^T$ as a $T$ repeated copy of $\beta$. We have the following claim.
\begin{claim}
\label{claim:larger_conn}
    Let $\beta$ be any permutation of the set $[k]\times [k]$. For large enough $T\geqslant 1$, the game $\mathcal{T}^{\beta^T}(G)$ is connected (in fact, has full support) if $G$ is loosely-connected to begin with. Furthermore, $T\leqslant \prod_{i=1}^k |\mathcal{X}_i|$ which only depends on the size of the game $G$.
\end{claim}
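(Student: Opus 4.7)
The plan is to show that iterating the full cycle $\mathcal{T}^\beta$ saturates the support of $G$ to all of $\mathcal{X}_1\times\cdots\times\mathcal{X}_k$ within at most $T\leqslant \prod_i |\mathcal{X}_i|$ iterations, proceeding in three stages.

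\textbf{Monotonicity and stabilization.} First, $\mathcal{T}^i_p$ only enlarges the support, since taking $q=q'$ in an $i$-link recovers $q$ itself. Loose-connectedness is preserved under adding tuples: any partition witnessing failure of loose-connectedness for a superset also works for the subset. I may assume without loss of generality that the projections onto each $\mathcal{X}_i$ are surjective (unused values can be pruned). Each full application of $\mathcal{T}^\beta$ either strictly enlarges the support or leaves it unchanged; since the support is bounded by $\prod_i |\mathcal{X}_i|$, stability must be reached within $T\leqslant \prod_i |\mathcal{X}_i|$ iterations. Because $\beta$ is a permutation of $[k]\times[k]$ and enlargement through the inner transformations is monotone, stability of $\mathcal{T}^\beta$ means that the support $S_t$ is \emph{closed} under every $\mathcal{T}^i_p$: for all $i,p\in[k]$ and all $q,q'\in S_t$ with $q|_i=q'|_i$, $\Pi^p((q,q'))\in S_t$.

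\textbf{Structural lemma.} It remains to show that a closed, loosely-connected $S\subseteq \prod_i \mathcal{X}_i$ with surjective projections equals $\prod_i \mathcal{X}_i$. First, closure under $\mathcal{T}^i_p$ for $p\neq i$ forces each slice $S^i_v:=\{q\in S: q|_i=v\}$ to be a combinatorial box $\{v\}\times \prod_{j\neq i} B^i_{v,j}$ for some sets $B^i_{v,j}\subseteq \mathcal{X}_j$: iteratively swapping one coordinate between two tuples in $S^i_v$ generates every tuple in the rectangle they span. Second, consider orbits of the equivalence relation ``$q\sim q'$ iff $q$ and $q'$ share at least one coordinate'' (taking transitive closure). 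The main claim is that each orbit $O$ equals $\prod_i O|_i$, where $O|_i$ denotes the projection onto coordinate $i$. Once this is in hand, if $S$ had more than one orbit, the partition $\mathcal{X}_i = O_1|_i \sqcup (\mathcal{X}_i\setminus O_1|_i)$ would be a nontrivial bipartition contradicting loose-connectedness; hence there is a single orbit, and surjectivity forces $O|_i=\mathcal{X}_i$, so $S=\prod_i \mathcal{X}_i$.

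\textbf{Main obstacle: orbits are product sets.} The crux is showing $B^i_{v,j}=B^i_{v',j}$ for all $v,v'\in O|_i$ and $j\neq i$. I propagate the equality along a path in the orbit connecting $v$ to $v'$: at each step where consecutive tuples $p,q$ of the path have $p|_i\neq q|_i$, their shared coordinate must be some $l\neq i$. For $j\neq i,l$ and any $w\in B^i_{p|_i,j}$, the box structure of $S^i_{p|_i}$ lets me pick $p'\in S^i_{p|_i}$ with $p'|_j=w$ and $p'|_l=p|_l=q|_l$; then $p'$ and $q$ share coordinate $l$, and applying $\mathcal{T}^l_i$ to $(q,p')$ produces a tuple in $S$ with coordinate $i$ equal to $q|_i$ and coordinate $j$ equal to $w$, witnessing $w\in B^i_{q|_i,j}$. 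The subtle case $j=l$ is where $k\geqslant 3$ is essential: one picks a third coordinate $j'\neq i,l$, uses the previous argument to deduce $B^i_{p|_i,j'}=B^i_{q|_i,j'}$, picks a common value $u'\in B^i_{p|_i,j'}$, and re-runs the same argument with $j'$ playing the role of the shared coordinate to obtain $B^i_{p|_i,l}=B^i_{q|_i,l}$. Chaining these equalities along the path gives $B^i_{v,j}=B^i_{v',j}$ for all $v,v'\in O|_i$ and $j\neq i$, completing the proof. (For $k=2$, $\mathcal{T}^i_p$ is trivially the identity, but loose-connectedness already coincides with the paper's notion of connectedness in that case, so $G$ itself is connected as required.)
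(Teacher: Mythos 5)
Your proof is correct and takes a genuinely different route from the paper's. After the common stabilization step, the paper proves fullness of the saturated support by a ``minimal bad prefix'' argument: it picks the smallest index $i$ such that some $i$-prefix $(z^1,\ldots,z^i)$ fails to extend to a question in the support, defines shadow sets $\mathcal{S}_\ell$ of values compatible with the prefix $(z^1,\ldots,z^{i-1})$, shows via closure under $\mathcal{T}^{i'}_{i}$ that no question can straddle $\mathcal{S}_i$ and $\overline{\mathcal{S}_{i'}}$, and then exhibits a partition $\mathcal{X}'_\ell=\{z^\ell\}$ (for $\ell<i$), $\mathcal{X}'_\ell=\mathcal{S}_\ell$ (for $\ell\geqslant i$) that contradicts loose-connectedness. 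You instead prove a clean structural lemma: in a set closed under every $\mathcal{T}^i_p$, each slice $S^i_v$ is a combinatorial box (by iterated single-coordinate swaps), and the equivalence classes of the coordinate-sharing relation are full product sets --- the box parameters $B^i_{v,j}$ propagate unchanged along orbit paths, with the $j=l$ shared-coordinate case requiring $k\geqslant 3$ and a relay through a third coordinate. Loose-connectedness then forces a single orbit, and surjectivity of the projections forces it to be all of $\prod_i\mathcal{X}_i$.

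Both arguments are valid, but yours is more symmetric and explanatory: it characterizes the fixed points of the saturation process (disjoint unions of boxes) rather than deriving a contradiction from an ad hoc minimal witness, and it makes the $k\geqslant3$ requirement explicit where the paper leaves it implicit. Your separate treatment of $k=2$ (transformations are identities, but loose-connectedness already coincides with connectedness) is also sound, and in fact highlights that the paper's parenthetical ``in fact, has full support'' is only correct for $k\geqslant3$; for $k=2$ the saturated game is connected but need not be complete bipartite, which is all the downstream argument requires.
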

\begin{proof}
    First, any transformation $\mathcal{T}^i_p$ does not shrink the support of the questions of the previous game. By looking closely at the transformation $\mathcal{T}^i_p$, we conclude the following: if $(x^1, x^2, \ldots, x^k)$ and $(y^1, y^2, \ldots, y^k)$ are both in the support of $\mu(G)$ with $x^i = y^i$, then the following question $(y^1,\dots,y^{p-1}, x^p, y^{p+1},\ldots, y^k)$ will be in the support of $\mu(\mathcal{T}^i_p(G))$. Using this, we also observe that if we start with any game $H$ with the property that the series of transformations $\mathcal{T}^\beta(H)$ does not change the support of the questions, then no future transformations will change the support on the questions (as $\beta$ contains every possible transformation $\mathcal{T}^i_p$).

    From the above discussion, we conclude that after some finite (only depends on the size of the game $G$) series of such transformations $\mathcal{T}^{\beta^T}(.)$, the support of the questions does not increase after another series of transformations $\mathcal{T}^{\beta}$. We denote the saturated game by $G_{\mathsf{final}} := \mathcal{T}^{\beta^T}(G)$ and the underlying distribution on the questions by $\mu_{\mathsf{final}}$. We show that $\mu_{\mathsf{final}}$ has full support on $\mathcal{X}_1\times \mathcal{X}_2\times \ldots \times \mathcal{X}_k$. 
    
    Suppose towards a contradiction, the support of $\mu_{\mathsf{final}}$ is not full. For any tuple $q$ of length $k$ and $S\subseteq [k]$, define the tuple $q|_{S}$ by taking the $S$ entries from $q$. Take the {\em smallest} $i\in [k]$ such that there is an $i$-tuple $(z^1, z^2, \ldots, z^i)$ where $(z^1, z^2, \ldots, z^i) \neq q|_{[i]}$ for any $q\in \mathsf{supp}(\mu_{\mathsf{final}})$. Let $z^{\star} := (z^1, z^2, \ldots, z^{i-1})$. For each $i\leqslant \ell \leqslant k$, define the following sets.
    $$\mathcal{S}_{\ell} = \{ x \in \mathcal{X}_{\ell} \mid \exists q\in \mathsf{supp}(\mu_{\mathsf{final}}) \mbox{ s.t. } (z^\star, x) = q|_{[i-1]\cup \{\ell\}}\}. $$
    In other words, $\mathcal{S}_{\ell}$ is a set of all $x \in \mathcal{X}_{\ell}$ that $(z^\star, x)$ can be extended to a valid question tuple from the game $G_{\mathsf{final}}$. Note that by the definition of $z^\star$, $\mathcal{S}_i \subsetneq \mathcal{X}_i$ and furthermore $\mathcal{S}_{i'}\neq \emptyset$ for any $i'\geqslant i$, as by the minimality of $i$, there is a valid question $q$ in $G_{\mathsf{final}}$ such that $q|_{[i-1]}= z^\star$, and hence $q|_{i'}\in \mathcal{S}_{i'}$ for all $i'\geqslant i$.

    \begin{claim}
        There is no $q\in \mathsf{supp}(\mu_{\mathsf{final}})$ such that $q|_{i} \in \overline{\mathcal{S}_i}$ and for some  $i'>i$, $q|_{i'}\in \mathcal{S}_{i'}$.
    \end{claim}
    \begin{proof}
        Suppose there is such a question $q\in \mathsf{supp}(\mu_{\mathsf{final}})$ such that $q|_{i} \in \overline{\mathcal{S}_i}$ and for some  $i'>i$, $q|_{i'}\in \mathcal{S}_{i'}$. Consider a question $q' = (z^1, z^2, \ldots, z^{i-1}, *, *,\ldots)$ where $q'|_{i'} = q|_{i'}$. Note that such a $q'$ is in the $\mathsf{supp}(\mu_{\mathsf{final}})$ from the definition of the set $\mathcal{S}_{i'}$. Furthermore, $(q, q')$ is an $i'$-link in the game $G_{\mathsf{final}}$. Therefore, the question $\Pi^{i}((q, q'))$ will be present in the game $\mathcal{T}^{i'}_{i}(G_{\mathsf{final}})$.  Recall that $\Pi^{i}((q, q')) = (z^1, z^2, \ldots, z^{i-1}, q|_{i}, *,\ldots)$. However, the question $(z^1, z^2, \ldots, z^{i-1}, q|_{i}, *,\ldots)$ is not in $\mathsf{supp}(\mu_{\mathsf{final}})$ as $q|_{i} \in \overline{\mathcal{S}_i}$. This means that the game $G_{\mathsf{final}}$ is not saturated, which is a contradiction.
    \end{proof}
    This claim implies that $\mathcal{S}_{i'} \subsetneq \mathcal{X}_{i'}$ for all $i'>i$ . Indeed, if $\mathcal{S}_{i'}= \mathcal{X}_{i'}$ for some $i'>i$, then the above claim shows that every question $q\in \mathsf{supp}(\mu_{\mathsf{final}})$, $q|_{i} \in \mathcal{S}_i$. Hence, $G_{\mathsf{final}}$ (and hence $G$) is not a loosely-connected game.

    This claim also implies that for every question $q\in \mathsf{supp}(\mu_{\mathsf{final}})$ such that $q|_{i}\in \overline{\mathcal{S}_i}$, we have $q|_{i'}\in \overline{\mathcal{S}_{i'}}$ for every $i'>i$. Consider the partition of the players' question sets $\mathcal{X}_\ell = \mathcal{X}'_\ell \cup \mathcal{X}''_\ell$ such that 
    \begin{itemize}
        \item For all $\ell\leqslant i-1$, $\mathcal{X}'_\ell = \{z^\ell\}$, and $\mathcal{X}''_\ell = \mathcal{X}_\ell\setminus \mathcal{X}'_\ell$, 
        \item for all $t\geqslant \ell$, $\mathcal{X}'_\ell = \mathcal{S}_\ell$, and $\mathcal{X}''_\ell = \mathcal{X}_\ell\setminus \mathcal{X}'_\ell$.
    \end{itemize}
    
    Because $G$ (and hence $G_{\mathsf{final}}$) is loosely connected, there must be a  question $q\in \mathsf{supp}(\mu_{\mathsf{final}})$ such that $q|_{i'}\in \overline{\mathcal{S}_{i'}}$ for every $i'\geqslant i$ and $q|_t = z^t$ for some $1\leqslant t\leqslant i-1$. Consider a question $q' = (z^1, z^2, \ldots, z^{i-1}, *, *,\ldots)$ such that $q'$ is in the $\mathsf{supp}(\mu_{\mathsf{final}})$. Now, the pair of questions $(q', q)$ is an $t$-link in the game $G_{\mathsf{final}}$, furthermore, for a questions $q'' : = \Pi^{i}((q,q'))$, we have $q'' = (z^1, z^2, z^3, \ldots, z^{i-1}, q|_{i}, \ldots)$ where $q|_{i}\in  \overline{\mathcal{S}_{i}}$. As $G_{\mathsf{final}}$ is saturated, $q'''\in \mathsf{supp}(\mu_{\mathsf{final}})$ but this contradicts the definition of $\mathcal{S}_i$.
\end{proof}

\subsection{Finishing the proof}
Let us see why these claims above are enough to prove Theorem~\ref{thm:projection_pr}.
\paragraph{Proof of Theorem~\ref{thm:projection_pr}.}  We start with a projection game $G$ with $\val(G) = 1-\eps$ for some $\eps>0$. First, using Lemma~\ref{lemma:loosely_c}, we can assume without loss of generality that $G$ is loosely-connected. Let $(\vec{i},\vec{p})\in ([k]\times [k])^{Tk^2}$ with $(\vec{i},\vec{p}) = \beta^T$,  i.e., $(i_t, p_t)$ is the $t^{th}$ entry from the string $\beta^T$, where $\beta$ and $T$ are from Claim~\ref{claim:larger_conn}. Let $T' = T\cdot k^2$.  \\
Using Claim~\ref{claim:gameGH_lb} on the projection game $G$ with $i_1, p_1\in [k]$, we have
$$ \val(G^{\otimes n}) \leqslant \val(\mathcal{T}^{i_1}_{p_1}(G)^{\otimes n})^{1/2}.$$
Fact~\ref{fact:proj_preserve} shows that $\mathcal{T}^{i_1}_{p_1}(G)$ is a projection game, and hence applying Claim~\ref{claim:gameGH_lb} on the projection game $\mathcal{T}^{i_1}_{p_1}(G)$ with $i_2, p_2\in [k]$, we get
$$ \val(\mathcal{T}^{i_1}_{p_1}(G)^{\otimes n}) \leqslant \val(\mathcal{T}^{i_2}_{p_2}(\mathcal{T}^{i_1}_{p_1}(G))^{\otimes n})^{1/2}.$$
Repeating this process $T'$ times, we get
$$ \val(G^{\otimes n}) \leqslant \val(\mathcal{T}^{i_{T'}}_{p_{T'}}(\ldots(\mathcal{T}^{i_2}_{p_2}(\mathcal{T}^{i_1}_{p_1}(G))))^{\otimes n})^{1/2^{T'}}.$$
Using Claim~\ref{claim:gameGH_ub} repeatedly, we have 
$$\val(\mathcal{T}^{i_{T'}}_{p_{T'}}(\ldots(\mathcal{T}^{i_2}_{p_1}(\mathcal{T}^{i_1}_{p_1}(G)))))\leqslant 1-\eps',$$
where $\eps'>0$, that only depends on $\eps$, $T'$, and the size of the game $G$. Finally, using Claim~\ref{claim:larger_conn}, we have that the game $\mathcal{T}^{i_{T'}}_{p_{T'}}(\ldots(\mathcal{T}^{i_2}_{p_2}(\mathcal{T}^{i_1}_{p_1}(G))))$ is connected and hence by Lemma~\ref{lemma:rpred_pr},
$$\val(\mathcal{T}^{i_{T'}}_{p_{T'}}(\ldots(\mathcal{T}^{i_2}_{p_2}(\mathcal{T}^{i_1}_{p_1}(G))))^{\otimes n})\leqslant \exp(-\Omega_{\eps,T,G}(n)).$$
Overall, we get $\val(G^{\otimes n}) \leqslant  \exp(-\Omega_{\eps,T,G}(n))$ and the proof is completed as $T$ only depends on the size of the original game $G$.

\paragraph{Acknowledgement.} We thank Kunal Mittal for helpful discussions at the early stage of this work.

\bibliographystyle{alpha}
\bibliography{refs}

\appendix

\section{Proof of Lemma~\ref{lemma:loosely_c}}
 Let $G$ be any projection game that is not loosely-connected with $\val(G) = 1-\eps$ for some $\eps>0$. Without loss of generality, we can assume that there are partitions $\mathcal{X}_i = \mathcal{X}'_i \cup \mathcal{X}''_i$ for all $i\in [k]$ such that all the questions from the support of $\mu(G)$ are from $\mathcal{X}'_1\times \mathcal{X}'_2\times \ldots \times \mathcal{X}'_k$ or $\mathcal{X}''_1\times \mathcal{X}''_2\times \ldots \times \mathcal{X}''_k$, and furthermore, the game restricted to $\mathcal{X}'_1\times \mathcal{X}'_2\times \ldots \times \mathcal{X}'_k$ (call it $G'$) and $\mathcal{X}''_1\times \mathcal{X}''_2\times \ldots \times \mathcal{X}''_k$ (call it $G'')$ are loosely-connected individually. The verifier's distribution $\mu(G)$ on the question-tuples can be thought of as $\mu = (1-\delta) \mu' + \delta \mu''$ where the support of $\mu'$ is from  $\mathcal{X}'_1\times \mathcal{X}'_2\times \ldots \times \mathcal{X}'_k$ and the support of $\mu''$ is from $\mathcal{X}''_1\times \mathcal{X}''_2\times \ldots \times \mathcal{X}''_k$.
    
    Now, since the value of the game $G$ is at most $1-\eps$, we have $\min\{\val(G'), \val(G'')\}\leqslant 1-\eps$. Without loss of generality, suppose we have $\val(G')\leqslant 1-\eps$. Let the value of $G^{\otimes n}$ be $\eta$. We will show that the value of the game $G'^{\otimes n'}$ is also at least $\eta- 2^{-\Omega_\delta(n)}$ for some $n'=\Omega_\delta(n)$. This will finish the proof of the lemma as we have $\val(G'^{\otimes n'})\leqslant \exp(-\Omega_{\eps}(n'))$ using the fact that $G'$ is a loosely-connected projection game.

    Fix a strategy $(\alpha^1, \alpha^2, \ldots, \alpha^k)$ for $G^{\otimes n}$ with value $\eta$. The $k$-tuple questions from $G$ can be alternatively sampled as follows. First sample a set $T\subseteq [n]$ by adding $i\in T$ independently with probability $(1-\delta)$. Then for each $i\in T$, sample a $k$-tuple question from the distribution $\mu'$ independently. Similarly, for each $i\notin T$, sample a $k$-tuple question from the distribution $\mu''$ independently. We have,

    $$\E_{\substack{T\subseteq_{1-\delta} [n]\\ (\vec{x}^1|_{\overline{T}}, \vec{x}^2|_{\overline{T}},\ldots, \vec{x}^k|_{\overline{T}}) \sim \mu''^{\otimes |\overline{T}|}}} \E_{(\vec{x}^1|_{T}, \vec{x}^2|_{T}, \ldots, \vec{x}^k|_{T}) \sim \mu'^{\otimes |T|}}\left[ V((\vec{x}^1, \vec{x}^2, \ldots, \vec{x}^k) , (\alpha^1(\vec{x}^1), \alpha^2(\vec{x}^2),\ldots,  \alpha^k(\vec{x}^k)) )\right]  = \eta.$$ 
    
    From this, by the Chernoff Bound and an averaging argument, it follows that there exists $T\subseteq [n]$ such that $|T| \geqslant (1-\delta)n/2$ and $(\vec{y}^1, \vec{y}^2,\ldots,  \vec{y}^k) \sim \mu''^{\otimes |\overline{T}|}$ such that 
    \begin{align*}
  \E_{(\vec{x}^1, \vec{x}^2, \ldots, \vec{x}^k) \sim \mu'^{\otimes |T|}}\left[ V(((\vec{y}^1 ,\vec{x}^1) , (\vec{y}^2 ,\vec{x}^2) , \ldots, (\vec{y}^k ,\vec{x}^k) ) , (\alpha^1((\vec{y}^1 ,\vec{x}^1)), \alpha^2((\vec{y}^2 ,\vec{x}^2)), \alpha^k((\vec{y}^k ,\vec{x}^k))) )\right]\\
  \geqslant \eta - 2^{-\Omega_\delta(n)}.      
    \end{align*} 
    Here, the string $(\vec{y} ,\vec{x})$ is formed by plugging $\vec{y}$ in the coordinates $\overline{T}$ and $\vec{x}$ in the coordinates $T$. The distribution on the questions in the expectation above precisely corresponds to the game $G'^{\otimes |T|}$. Thus, the strategy $\alpha'^i(\vec{x}) := \alpha^i(\vec{y} ,\vec{x})$ for all $i\in [k]$ gives the value at least $\eta - 2^{-\Omega_\delta(n)}$ for the game $G'^{\otimes |T|}$.
\end{document}